\newtheorem{theorem}{Theorem}
\newtheorem{corollary}{Corollary}
\newtheorem{lemma}{Lemma}
\newtheorem{proposition}{Proposition}
\newtheorem{conjecture}{Conjecture}
\newtheorem{definition}{Definition}
\newcommand{\Glasses}[1]{
\begin{tikzpicture}[scale=4.0,>=stealth,shorten >=0.1pt, auto, semithick,
                     nodop/.style={circle,draw=black,fill=black,minimum size=3pt, inner sep=0pt},
                     every to/.style={draw,thin,black}]

\def \msqt {0.8660254037844386}
\def \triangleScale {1}
\def \angleAdjust {-30}

\centering

\begin{scope}[xshift = -0.5cm, yshift = -1.6cm, scale = 0.5]
\node at (0, -1) {#1};

\node [nodop] (1) at (-0.25,0) {};
\node [nodop] (11) at (0.25,0) {};
\node [nodop] (2) at (-0.25-\triangleScale * \msqt, 0.5 * \triangleScale) {};
\node [nodop] (3) at (-0.25-\triangleScale * \msqt, -0.5 * \triangleScale) {};
\node [nodop] (4) at (+0.25+\triangleScale * \msqt, 0.5 * \triangleScale) {};
\node [nodop] (5) at (+0.25+\triangleScale * \msqt, -0.5 * \triangleScale) {};

\path (1) edge (2) edge (3) edge (1) edge (11) ;
\path (11) edge (4) edge (5);

\path (2) edge (3);

\path  (4) edge (5);
\end{scope}
\end{tikzpicture}}
\def\ps@pprintTitle{%
   \let\@oddhead\@empty
   \let\@evenhead\@empty
   \def\@oddfoot{\reset@font\hfil\thepage\hfil}
   \let\@evenfoot\@oddfoot
}
\begin{document}

\title{Analysis and Reliability of Separable Systems}





\author[1]{H\'ector Cancela} \ead{cancela@fing.edu.uy}
\author[2]{Gustavo Guerberoff\corref{cor1}} \ead{gguerber@fing.edu.uy}
\author[1,2]{Franco Robledo}\ead{frobledo@fing.edu.uy}
\author[1,2]{Pablo Romero}\ead{promero@fing.edu.uy}


\address[1]{Departamento de Investigaci\'on Operativa, Instituto de Computación, Facultad de Ingenier\'ia, Universidad de la República, Montevideo, Uruguay}
\address[2]{Laboratorio de Probabilidad y Estad\'istica, Instituto de Matemática y Estadística, Facultad de Ingenier\'ia, Universidad de la República, Montevideo, Uruguay}

\cortext[cor1]{Corresponding author}


\begin{abstract}
The operation of a system, such as a vehicle, communication network or automatic process, heavily depends on the correct operation of its components. A Stochastic Binary System (SBS) mathematically models the behavior of on-off systems, where the components are subject to probabilistic 
failures. Our goal is to understand the reliability of the global system. 

The reliability evaluation of an SBS belongs to the class of NP-Hard problems, and the combinatorics of SBS imposes several challenges. In a previous work by the same authors, a special sub-class of SBSs called \emph{separable systems} was introduced. These systems accept an efficient representation by a linear inequality on the binary states of the components. However, the reliability evaluation of separable systems is still hard.

A theoretical contribution in the understanding of separable systems is given. We fully characterize 
separable systems under the all-terminal reliability model, finding that they admit efficient reliability evaluation in this relevant context. 

\end{abstract}

\begin{keyword}
Stochastic Binary System, Network Reliability, Computational Complexity, Separable systems,  Mathematical Programming.
\end{keyword}

\maketitle

\section{Introduction}

Research on system reliability includes different models, metrics and algorithms for analyzing how the possible failures of components affect the behaviour of a complex system. 
The practical applications of system reliability analysis are steadily growing in frequency and diversity. We mention just a few examples here. For instance, the paper~\cite{JOHANSSON201327} considers a joint reliability/vulnerability analysis of critical infrastructures, with a practical impact into electrical networks. The article~\cite{MACCHI201271} develops a reliability model in order to identify critical elements in a railway system. The results have practical use in the Italian public company
Rete Ferroviaria Italiana. The work by Li et al. \cite{Li201624} discusses two measures of infrastructure networks increasingly dependent on information systems, namely connectivity reliability  and the topological controllability in terms of topology, robustness, and node importance, taking  eight city-level power transmission networks and thousands of artificial networks to discuss the use of  these measures to improve reliability-based design. The work  \cite{MURIELVILLEGAS2016151} analyzes the impact of natural hazards on transportation networks in Colombia, focusing on the connectivity reliability and vulnerability of inter-urban transportation affecting remote populations in the case of disasters such as floods.

Classical network analysis was developed taking into account a system modelled as a graph, where either nodes or links (or both) are subject to failure, and where the system as a whole works correctly when the subgraph resulting from deleting the failed components verifies some connectivity constraint (in general, that a given subset of nodes is connected; this includes all terminal connectivity and source-terminal connectivity as special cases).
The theory of stochastic binary systems can be seen as a generalization of network reliability models; as the system structure function in an SBS can be any arbitrary boolean function of the components, instead of a variant of a connectivity function over a graph.

The mathematical understanding of SBS involves challenges in terms of complexity, combinatorics, and reliability analysis. In the most general setting, even the determination of operational or 
non-operational configurations are algorithmically hard problems. Under a realistic assumption 
of monotonicity or \emph{well-behavior}, finding minimally operational configurations 
accept efficient algorithms; see~\cite{Romero2019} for details. The interplay between static systems 
and dynamical stochastic binary systems is also being explored, in terms of stochastic processes~\cite{Lagos20}. The lifetimes of independent random variables govern a stochastic process where the components fail, until a non-operational system is obtained. An elegant interplay between these dynamical systems provides a new approach to reliability estimation.\\

The reliability evaluation of SBS belongs to the class of $\mathcal{NP}$-Hard problems. 
Furthermore, reliability evaluation of special (well-behaved) SBS also belongs to this class. This fact promotes the need of distinguished sub-classes, and the development of novel approximative techniques.  In~\cite{RNDM2018} the concept of separability in stochastic binary systems was introduced. 
As discussed below, separable systems are those whose structure function can be characterized by a  hyperplane separating operational from failure states. 
Separable systems have some particular properties which can be of interest in the study of system reliability.

This paper aims to advance in the analysis of separable system. 
Specifically, the contributions of the present work can be summarized as follows:
\begin{enumerate}
\item The $\mathcal{NP}$-Hardness of the reliability evaluation for separable systems is established.
\item A separable system under the all-terminal reliability model is called a separable graph. We fully characterize separable graphs. As a corollary, we conclude that the reliability evaluation of separable graphs can be obtained in linear time.
\item A discussion of the level of separability for non-separable systems is presented. 
\end{enumerate}

The remainder of this paper is organized in this way. Section~\ref{sbs} presents the main definitions of stochastic binary systems and separable systems. 
A theoretical analysis of separable systems is covered in Section~\ref{characterization}. It includes a characterization of separable systems, as well as the  analysis of the computational complexity  of the reliability 
evaluation for these systems. A particular analysis of the all-terminal reliability model is presented in Section~\ref{all-terminal}. Generalizations of the concept of separability by hyperplanes are studied in Section~\ref{dseparability}. Finally, Section~\ref{conclusiones} presents concluding remarks and trends for future work.

\section{Stochastic Binary Systems and Separable Systems}\label{sbs}
 
In this paper we will use the following definitions and notation.

\begin{definition}[SBS]
A \emph{stochastic binary system} (SBS) is a triad $\mathcal{S} = (S, r,\phi)$:
\begin{itemize}
\item $S$: ground (finite) set of \emph{components}, usually $S=\{1,\ldots,N\}$; a configuration or a \emph{state} of the system is an element of $\Omega=\{\sigma:S\to \{0,1\}\}$.
\item $r$: probability measure on  $\Omega$.
\item $\phi:\Omega \to \{0,1\}$: \emph{structure} function.
\end{itemize}
\end{definition}

Given a state $\sigma\in\Omega$, $\phi(\sigma)=1$ means that the  system is in an operational state; we call $\sigma$ a {\em pathset}. Respectively, if $\phi(\sigma)=0$ then  the  system is in a failure state; we call $\sigma$ a {\em cutset}.

The reliability of a SBS  is its probability of correct operation: 
\begin{equation}
 R_{\mathcal{S}} = P(\phi=1) = \sum_{\sigma\in\Omega:\phi(\sigma)=1}r(\sigma).
\end{equation}
The \emph{unreliability} of $\mathcal{S}$ is $ U_{\mathcal{S}}=1- R_{\mathcal{S}}$.

\begin{definition}[SMBS]
An SBS is {\em monotone} if the structure function $\phi$ is monotonically increasing with respect to the usual partial order in $\Omega$,  $\phi({\bold 0})=0$ and $\phi({\bold 1})=1$. We denote such an SBS as a Stochastic Monotone Binary System (SMBS).
\end{definition}

\begin{definition}[Minpaths/Mincuts]
Let $\mathcal{S} = (S, r,\phi)$ be an SMBS:
\begin{itemize}
\item A pathset $\sigma$ is a \emph{minpath} if $\phi(\omega)=0$ for all $\omega<\sigma$.
\item A cutset $\omega$ is a \emph{mincut} if $\phi(\sigma)=1$ for all $\sigma>\omega$.
\item A $\sigma$-ray is the set $S_\sigma=\{ \omega\in \Omega: \omega\geq \sigma\}$.
\end{itemize}
\end{definition}

An SMBS is fully characterized by its mincuts (or its minpaths). 
In fact, if we are given the complete list of minpaths, then the complete list of pathsets is precisely the union of the $\sigma$-rays among all minpaths $\sigma$.





As the class of SMBSs include the classical $K$-terminal graph reliability problem, which is known to belong to the $\mathcal{NP}$-Hard class~\cite{Rosenthal}, the  reliability evaluation of an SMBS belongs to the class of $\mathcal{NP}$-Hard problems. Of course the same applies for the (still more general) problem of reliability evaluation of an SBS.


We consider in what follows $S=\{1,\ldots,N\}$, so that $\Omega=\{0,1\}^N$ is the set of the extremal points of the unit hypercube  in $\mathbb{R}^{N}$.

A hyperplane  in the Euclidean space $\mathbb{R}^N$ is fully characterized by its normal vector $\vec{n}$ and a point $P$ that belongs to the hyperplane: $\langle \vec{n},X-P \rangle =0$, where
$\langle x,y \rangle =\sum_{i=1}^{N} x_iy_i$ is the inner product. If we denote $\vec{n}=(n_1,\ldots,n_N)$ and $\langle \vec{n},P \rangle =\alpha_0$, the points of the hyperplane are those satisfying the equation $\sum_{i=1}^{N} n_ix_i = \alpha_0$.
For ease of discussion (and without losing generality) we will choose the representation of a hyperplane so that any  cutset $\omega$ lies on the hyperplane or in its negative side (i.e. the geometric points that verify $\sum_{i=1}^{N} n_i\omega_i \leq \alpha_0$), and any pathset $\sigma$ lies on the positive side of the hyperplane (i.e. $\sum_{i=1}^{N} n_i\sigma_i >\alpha_0$).

Such representation is justified by the following observation: for any separating hyperplane $H$, there exists $H^{\prime} \sim H$ (that is, a hyperplane $H^{\prime}$ separating the same subset of cutsets and pathsets as $H$), 
with non-negative components of the normal vector, such that $\|\vec{n} \|_1 = \sum_{i=1}^{N}n_i = 1$. Also, it is possible to replace this normal vector by another non-negative normal vector with unit 1-norm.

 \begin{definition}[Separable System]
An SBS is \emph{separable} if the cutsets/pathsets can be separated by some hyperplane in $\mathbb{R}^{N}$.
\end{definition}

It turns out that separable systems are a special sub-class of SMBSs. However, in Section~\ref{characterization} we will build an infinite family of examples of SMBSs that are not separable. 

We enumerate a number of relevant results about separable SMBS presented in our previous conference paper \cite{RNDM2018}, which will be useful in the remainder of the paper.

\section{Analysis of Separable Systems}\label{characterization}
In this section, we first study the hardness of the reliability evaluation for separable systems. 
Then, we provide two alternative characterizations of these systems.

\subsection{Complexity}
Even though separable systems accept an efficient representation, their reliability evaluation is computationally hard~\cite{Romero2019}. The key is a reduction from {\it PARTITION}, which belongs 
to the class of $\mathcal{NP}$-Hard problems~\cite{Garey:1979:CIG:578533}. 
A slight variation of the proof offered in~\cite{Romero2019} is given here: 
\begin{theorem}\label{complexseparable}
The reliability evaluation of separable systems belongs to the class of $\mathcal{NP}$-Hard problems.
\end{theorem}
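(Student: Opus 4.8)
The plan is to construct a polynomial-time reduction from \emph{PARTITION} to the reliability evaluation of a separable system. An instance of PARTITION is a finite multiset of positive integers $a_1,\ldots,a_N$ with total weight $\sum_{i=1}^{N} a_i = 2T$, and the question is whether some subset of indices has weight exactly $T$. First I would encode such an instance as the separable system on $S=\{1,\ldots,N\}$ whose separating hyperplane has normal vector $\vec{n}=(a_1,\ldots,a_N)$ and threshold $\alpha_0 = T+\tfrac12$; that is, $\phi(\sigma)=1$ precisely when $\sum_{i=1}^{N} a_i\sigma_i > T+\tfrac12$. Because the weights are strictly positive, this structure function is monotone and satisfies $\phi(\mathbf 0)=0$ and $\phi(\mathbf 1)=1$ (the latter since $\sum_i a_i = 2T > T+\tfrac12$), and it is separable by construction; the half-integer threshold ensures that no state lies on the hyperplane, so that pathsets sit strictly on its positive side as the convention of Section~\ref{sbs} demands.

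Next I would take $r$ to be the uniform probability measure on $\Omega$, so that $r(\sigma)=2^{-N}$ for every state. Writing $N_{>T}$ for the number of subsets of weight strictly larger than $T$, the reliability then reduces to the counting expression
\begin{equation}
R_{\mathcal{S}} = \sum_{\sigma:\,\phi(\sigma)=1} r(\sigma) = 2^{-N}\,N_{>T}.
\end{equation}
The heart of the argument is a complementation symmetry: sending a subset to its complement sends weight $s$ to $2T-s$, so the subsets of weight $>T$ are in bijection with those of weight $<T$, whence $N_{>T}=N_{<T}$. Sorting all $2^{N}$ subsets by whether their weight falls below, at, or above $T$ gives $2N_{>T}+N_{=T}=2^{N}$, where $N_{=T}$ is exactly the number of witnesses to a solution of the PARTITION instance.

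Eliminating $N_{>T}$ between the two relations yields $N_{=T}=2^{N}\bigl(1-2R_{\mathcal{S}}\bigr)$, so the instance admits a solution if and only if $R_{\mathcal{S}}<\tfrac12$. Hence any algorithm that evaluates $R_{\mathcal{S}}$ exactly in polynomial time would decide PARTITION in polynomial time, and the $\mathcal{NP}$-Hardness of reliability evaluation for separable systems follows from that of PARTITION. I expect the only delicate point to be the treatment of states lying near the hyperplane: one must verify that the threshold $T+\tfrac12$ classifies every subset of weight exactly $T$ as a cutset, since otherwise the clean identity above would pick up correction terms. The half-integer offset is precisely the device that rules this out, and with it the equivalence is immediate.
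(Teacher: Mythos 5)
Your proposal is correct, and it shares the paper's core strategy: reduce from \emph{PARTITION} by building a separable system whose weights are the instance numbers, take the uniform measure on $\Omega$ so that reliability becomes a normalized count of subsets above a weight threshold, and observe that recovering the number of subsets of weight exactly half the total decides the instance. Where you differ is in how that count is extracted. The paper evaluates \emph{two} systems, with thresholds $\tfrac12$ and $\tfrac12+\tfrac{n_{\min}}{2}$ in the normalized weights, and isolates $P\bigl(\sum_i n_i\sigma_i=\tfrac12\bigr)$ as the difference of the two reliabilities; the choice of the second threshold must be argued to be tight enough that nothing but the exact-weight-$T$ subsets falls between the two hyperplanes. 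You instead use a single system with the half-integer threshold $T+\tfrac12$ and the complementation symmetry $B\mapsto B^{c}$, which gives $N_{>T}=N_{<T}$ and hence $N_{=T}=2^{N}\bigl(1-2R_{\mathcal S}\bigr)$, so one oracle call and a comparison with $\tfrac12$ suffice. Your version is marginally more economical (one evaluation instead of two, and no need to reason about the gap between two thresholds), while the paper's version avoids the symmetry argument; both are legitimate polynomial-time Turing reductions, which is the appropriate notion for $\mathcal{NP}$-hardness of an evaluation problem. Two small points you should make explicit: (i) your setup presumes $\sum_i a_i=2T$ is even --- if the total is odd the instance is trivially a \emph{NO} and must be dispatched separately, since then $T+\tfrac12$ is an integer and the identity $R_{\mathcal S}=2^{-N}N_{>T}$ would pick up boundary terms; and (ii) to match the paper's normalization convention one should divide the weights by $2T$ so that $\|\vec n\|_1=1$, though this changes nothing in the argument.
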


\begin{proof}
By reduction from {\it PARTITION}. Consider an instance of natural numbers $A=\{a_1,\ldots,a_N\}$, and let 
$s=\sum_{i=1}^{N}a_i$ be the sum over the elements of the list. 
Let us define $n_i = \frac{a_i}{s}$, $n_{min}=\min_{i=1,\ldots,N}\{n_i\}$, and consider the separable systems $\mathcal{S}_1$ and $\mathcal{S}_2$: 
\begin{enumerate}
\item $\mathcal{S}_1$ is characterized by the hyperplane $\sum_{i=1}^{N}n_i x_i =  \frac{1}{2}+\frac{n_{min}}{2}$; 
\item $\mathcal{S}_2$ is characterized by the hyperplane $\sum_{i=1}^{N}n_i x_i = \frac{1}{2}$; 
\end{enumerate}
Observe that the difference of the reliability of both systems evaluated at $p=1/2$ is:
\begin{align*}
R_{\mathcal{S}_2}(1/2)-R_{\mathcal{S}_1}(1/2) 
&=P(\sum_{i=1}^{N}n_i \sigma_i\geq\frac{1}{2})-P(\sum_{i=1}^{N}n_i \sigma_i\geq\frac{1}{2}+\frac{n_{min}}{2})\\
&= P(\sum_{i=1}^{N}n_i \sigma_i=\frac{1}{2})\\
&= \frac{\# \{(\sigma_1,\ldots,\sigma_N)\in \{0,1\}^N: \sum_{i=1}^{N}n_i \sigma_i=\frac{1}{2}\}}{2^N},
\end{align*}
and the last number is positive if and only if there exists a subset $B \subseteq \{1,\ldots,N\}$ such that 
$\sum_{i\in B}n_i=\frac{1}{2}$. In that case, if we multiply on both sides by $s$ we get that $\sum_{i \in B}a_i = \frac{s}{2}$, and the answer to {\it PARTITION} for the list $A$ is {\it YES}. Otherwise, the answer to {\it PARTITION} is {\it NO}. 
Therefore, the reliability evaluation of separable systems is at least as hard as {\it PARTITION}, and it belongs to the class of $\mathcal{NP}$-Hard problems. 
\end{proof}

\subsection{Characterizations}
Separable systems can be characterized using Hahn-Banach separation theorem for compact and convex sets~\cite{rudin1974functional}. If $CH(\mathcal{P})$ and $CH(\mathcal{C})$ denote the convex hull of the pathsets and cutsets respectively, then following result holds:
\begin{theorem}[Prop. 3,~\cite{RNDM2018}]\label{hahn}
An SBS is separable iff $CH(\mathcal{P}) \cap CH(\mathcal{C}) = \emptyset$.
\end{theorem}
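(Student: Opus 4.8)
The plan is to recognize this statement as a direct application of the strong separation theorem for disjoint compact convex sets, and to carry out the two implications separately.

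For the easy direction ($\Rightarrow$), I would assume the system is separable, so by the convention fixed in the text there exist a normal vector $\vec{n}$ and a threshold $\alpha_0$ with $\langle \vec{n}, \sigma \rangle > \alpha_0$ for every pathset $\sigma$ and $\langle \vec{n}, \omega \rangle \leq \alpha_0$ for every cutset $\omega$. Since the pathset and cutset families are finite, I set $m = \min_{\sigma \in \mathcal{P}} \langle \vec{n}, \sigma \rangle$, which satisfies $m > \alpha_0$. The map $x \mapsto \langle \vec{n}, x \rangle$ is affine, so any convex combination of pathsets still evaluates to at least $m > \alpha_0$, while any convex combination of cutsets still evaluates to at most $\alpha_0$. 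Hence $\langle \vec{n}, \cdot \rangle > \alpha_0$ on all of $CH(\mathcal{P})$ and $\langle \vec{n}, \cdot \rangle \leq \alpha_0$ on all of $CH(\mathcal{C})$, so the two hulls cannot share a point and $CH(\mathcal{P}) \cap CH(\mathcal{C}) = \emptyset$.

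For the converse ($\Leftarrow$), I would exploit that $\mathcal{P}$ and $\mathcal{C}$ are finite subsets of $\{0,1\}^N$, so their convex hulls are polytopes, and in particular compact and convex. Assuming they are disjoint, the Hahn--Banach strong separation theorem furnishes a nonzero $\vec{n}$ and scalars $\alpha < \beta$ with $\langle \vec{n}, x \rangle \leq \alpha$ for all $x \in CH(\mathcal{C})$ and $\langle \vec{n}, y \rangle \geq \beta$ for all $y \in CH(\mathcal{P})$. Choosing any $\alpha_0 \in (\alpha, \beta)$ yields a hyperplane on which every cutset satisfies $\langle \vec{n}, \omega \rangle \leq \alpha_0$ and every pathset satisfies $\langle \vec{n}, \sigma \rangle > \alpha_0$, which is exactly the separating condition in the definition; hence the system is separable.

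The only genuine subtlety, and the step I would be most careful about, is invoking the strong (strict) separation theorem rather than the merely weak one. Weak separation would only guarantee a supporting hyperplane and could leave pathsets and cutsets sharing a common face, which would not match the paper's requirement that pathsets lie strictly on the positive side. Strictness is available here precisely because both convex hulls are compact, so the distance between them is positive whenever they are disjoint, and this positive gap is what lets me insert the threshold $\alpha_0$ strictly between the two bodies. For completeness I would also remark that the normalization already discussed in the text permits rescaling $\vec{n}$ to have nonnegative entries of unit $1$-norm without altering the separation.
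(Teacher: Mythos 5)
Your proposal is correct and follows essentially the same route as the paper: the forward direction evaluates the separating linear functional on convex combinations (you argue it directly, the paper phrases it as a contradiction at a hypothetical common point $z$, but the computation is identical), and the converse invokes the Hahn--Banach separation theorem for the two disjoint compact convex hulls. Your added remark on needing \emph{strict} separation, available because the hulls are compact polytopes, is a point of care the paper glosses over, but it is not a different argument.
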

\begin{proof}
If the intersection is empty, the Hahn-Banach separation theorem for convex sets asserts that there exists a hyperplane $H$ that separates those convex sets; see Theorem 3.4 in~\cite{rudin1974functional}  for a full proof. As a consequence, $\phi = \phi_H$ for some hyperplane $H$. 

For the converse, we know that the SBS is separable. Therefore, there exists some hyperplane $H) \sum_{i=1}^{N}n_ix_i=\alpha_0$ such that  $\sum_{i=1}^{N}n_i \omega_i \leq \alpha_0$ for cutsets, and $\sum_{i=1}^{N}n_i \sigma_i > \alpha_0$ for pathsets. Suppose for a moment that $CH(\mathcal{P}) \cap CH(\mathcal{C}) \neq \emptyset$. 
There exists some element $z \in \mathbb{R}^{N}$ such that:
\begin{equation}
z = \sum_{j=1}^{l}\alpha_j\sigma^j = \sum_{k=1}^{s}\beta_k \omega^k,
\end{equation}
for some states $\sigma^1,\ldots,\sigma^l \in \mathcal{P}$, $\omega^1,\ldots,\omega^s \in \mathcal{C}$, and non-negative numbers such that $\sum_{j=1}^{l}\alpha_j = \sum_{k=1}^{s}\beta_k=1$. If we denote $\sigma^j=(\sigma^j_{1},\ldots,\sigma^j_{N})$ we know that $\sum_{i=1}^{N}n_i \sigma^j_{i} > \alpha_0$.
Therefore, for $z=(z_1,\ldots,z_N)$ we get that:
\begin{align*}
\sum_{i=1}^{N}n_iz_i &= \sum_{i=1}^{N}n_i (\sum_{j=1}^{l}\alpha_j\sigma^j_{i})\\
                     &= \sum_{j=1}^{l}\alpha_j [\sum_{i=1}^{N}n_i\sigma^j_{i}]\\
                     &> (\sum_{j=1}^{l}\alpha_j)\alpha_0 = \alpha_0.
\end{align*}
Analogously, using the fact that $z = \sum_{k=1}^{s}\beta_k\omega^k$ we get that $\sum_{i=1}^{N}n_iz_i \leq \alpha_0$, which is a contradiction.
Therefore we must have $CH(\mathcal{P}) \cap CH(\mathcal{C}) = \emptyset$, and the result holds.
\end{proof}

While the structure function of some SMBS can be defined by a hyperplane, there exist SMBSs that are not separable. In fact, consider an arbitrary number of components $N \geq 4$, 
and the SMBSs family $\mathcal{S}_N$ characterized by two mincuts $\{\omega^1,\omega^2\}$ such that 
$\omega^2 = {\bold 1} - \omega^1$, and $\omega^1$ is defined as follows: 
\begin{align*}
\omega^1_i &=1, \, \,  \forall i=1,\ldots,\lfloor N/2 \rfloor,\\
\omega^1_i &=0, \, \,  \forall i=\lfloor N/2 \rfloor+1,\ldots, N.
\end{align*}

Consider the states $\sigma^1$ and $\sigma^2$, such that $\sigma^2 = {\bold 1} - \sigma^1$, and 
\begin{align*}
\sigma^1_{2i-1} &= 1, \, \forall i=1,\ldots, \lfloor N/2 \rfloor,\\
\sigma^1_{2i} &= 0, \, \forall i=1, \ldots, \lfloor N/2 \rfloor.
\end{align*}
First, observe that $\sigma^1$ and $\sigma^2$ are not upper-bounded by $\omega^1$ nor $\omega^2$ 
(i.e., the inequalities $\sigma^i \leq \omega^j$ do not hold, for any pair $i,j \in \{1,2\}$). Therefore, $\sigma^1$ and $\sigma^2$ must be pathsets, since $\mathcal{S}_N$ is the SBS characterized by the mincuts $\omega^1$ and $\omega^2$. 
Further, $\frac{\sigma^1+\sigma^2}{2}=\frac{\omega^1+\omega^2}{2} = \frac{1}{2} {\bold 1}$. 
By Hahn-Banach theorem, the infinite family of SMBSs $\{\mathcal{S}_N\}_{N \geq 4}$ is not separable (see Theorem~\ref{hahn}). 
By an exhaustive analysis, it can be observed that all SMBS are separable systems when the 
number of components is not greater than three. A geometric interpretation is also feasible in this cases. However, the challenges arise in higher dimensions.\\


In the following, we consider an alternative characterization of separable systems in terms 
of weighted cutsets and pathsets. Consider an arbitrary assignment $n_1,\ldots,n_N$ 
of non-negative numbers to the respective components of the system. The condition 
$\sum_{i:\sigma_i=1}n_i \geq \alpha_0$ for all the pathsets is equivalent to finding 
the pathset $\sigma$ with minimum-cost, $c(\sigma)=\sum_{i:\sigma_i=1}n_i$, and testing if $c(\sigma) \geq \alpha_0$. 
Analogously, the condition $\sum_{i:\omega_i=1}n_i < \alpha_0$ for all the cutsets is equivalent to testing whether the cutset $\omega$ with minimum cost, $c(\omega)=\sum_{i:\omega_i=0}n_i$, 
satisfies the test $S-c(\omega) < \alpha_0$, where $S=\sum_{i=1}^{N}n_i$ is the cost of the global system. 
Observe that, for convenience, the cost of a cutset is defined as the sum of the components under failure. In particular, we get the following characterization of separable systems:
\begin{theorem}\label{assign}
An SBS is separable if and only if there exists an assignment of non-negative costs to the components $\{n_i\}_{i=1,\ldots,N}$ such that $S<c(\sigma)+c(\omega)$, where $c(\sigma)$ and $c(\omega)$ denote pathset/cutset minimum-cost respectively. 
\end{theorem}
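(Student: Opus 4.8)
The plan is to translate the two one-sided separation conditions into statements about the extremal (minimum-cost) pathset and cutset, and then to observe that the existence of an admissible threshold $\alpha_0$ is exactly the stated inequality. Throughout I write $c(\sigma)=\min_{\sigma\in\mathcal{P}}\sum_{i:\sigma_i=1}n_i$ and $c(\omega)=\min_{\omega\in\mathcal{C}}\sum_{i:\omega_i=0}n_i$, both attained since $\mathcal{P}$ and $\mathcal{C}$ are finite.

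For the forward implication I would start from separability. Using the normalization observation recorded before the definition of a separable system, I may take a separating hyperplane $\sum_{i=1}^{N}n_i x_i=\alpha_0$ whose normal vector $\vec n$ has non-negative entries; these entries serve as the required cost assignment. The pathset side reads $\sum_{i:\sigma_i=1}n_i>\alpha_0$ for every pathset, so minimizing over the finite set $\mathcal{P}$ gives $c(\sigma)>\alpha_0$. The cutset side reads $\sum_{i:\omega_i=1}n_i\le\alpha_0$ for every cutset; rewriting $\sum_{i:\omega_i=1}n_i=S-\sum_{i:\omega_i=0}n_i$ and taking the cutset that minimizes $\sum_{i:\omega_i=0}n_i$ yields $S-c(\omega)\le\alpha_0$. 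Chaining these two inequalities produces
\[
 S-c(\omega)\le \alpha_0 < c(\sigma),
\]
and therefore $S<c(\sigma)+c(\omega)$, as desired.

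For the converse I would run the same computation backwards. Given non-negative costs with $S<c(\sigma)+c(\omega)$, i.e. $S-c(\omega)<c(\sigma)$, I set $\alpha_0:=S-c(\omega)$ (in fact any value in $[S-c(\omega),c(\sigma))$ works) and claim the hyperplane $\sum_{i=1}^{N}n_i x_i=\alpha_0$ separates the system. Indeed, every pathset satisfies $\sum_{i:\sigma_i=1}n_i\ge c(\sigma)>\alpha_0$, placing it strictly on the positive side, while every cutset satisfies $\sum_{i:\omega_i=1}n_i=S-\sum_{i:\omega_i=0}n_i\le S-c(\omega)=\alpha_0$, placing it on the hyperplane or its negative side. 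This is precisely the configuration fixed in the hyperplane convention, so the system is separable.

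The steps are individually routine; the only point that needs care is the handling of strict versus non-strict inequalities. The strictness in $S<c(\sigma)+c(\omega)$ must come from the pathset side, and it survives only because $\mathcal{P}$ is finite, so that the minimum $c(\sigma)$ is attained and the strict bound $\sum_{i:\sigma_i=1}n_i>\alpha_0$ passes to $c(\sigma)>\alpha_0$ rather than degrading to $c(\sigma)\ge\alpha_0$. I also rely on the non-negativity of the normal vector from the normalization observation, which is what guarantees that the cost assignment extracted in the forward direction meets the non-negativity requirement of the statement.
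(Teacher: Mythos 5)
Your proof is correct and follows essentially the same route as the paper's: extract a non-negative normal vector as the cost assignment, translate the two one-sided conditions into $c(\sigma)$ and $S-c(\omega)$ bounds on $\alpha_0$, and in the converse pick a threshold in the admissible interval (you choose $\alpha_0=S-c(\omega)$ where the paper chooses $\alpha_0=c(\sigma)$, an immaterial difference). Your explicit attention to which side carries the strict inequality is a welcome clarification of a point the paper glosses over by silently switching between the convention ``pathsets $>\alpha_0$, cutsets $\le\alpha_0$'' and its flipped variant.
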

\begin{proof}
First, let us assume that we have a separable SBS with hyperplane $\sum_{i=1}^{N}n_ix_i = \alpha_0$. 
Using the previous reasoning, the assignment $\{n_i\}_{i=1,\ldots,N}$ verifies $c(\sigma) \geq \alpha_0$ 
and $S-c(\omega) < \alpha_0$. Therefore, $S < c(\omega)+c(\sigma)$.

For the converse, let us fix $\alpha_0=c(\sigma)$, the pathset with minimum cost. Clearly, the specific 
pathset $\sigma$ meets the condition $\sum_{i=1}^{N}n_i\sigma_i \geq \alpha_0$; in fact the equality is met. 
By its definition, the inequality holds for the other pathsets. Finally, 
we use the fact that $S<c(\omega)+c(\sigma)$ to verify that the cutset with minimum-cost, $\omega$, meets the inequality 
$\sum_{i=1}^{N}n_i\omega_i < \alpha_0$. The inequality for the other cutsets is straight since $\omega$ 
is a cutset with minimum-cost. Therefore, the SBS is separable, concluding the proof.
 \end{proof}

\section{Separability in Graphs} \label{all-terminal}
Our characterization of separable systems has a straightforward reading in the all-terminal 
reliability model. 
\begin{definition}[Separable Graph]
A graph $G=(V,E)$ is \emph{separable} if there exists an assignment of non-negative real numbers $n_1,\ldots,n_m$ to its $m$ links, and there exists a threshold $\alpha$ such that $c(E^{\prime}) \geq \alpha$ if and only if the spanning subgraph $G^{\prime}=(V,E^{\prime})$ is connected. 
\end{definition}
Let $G$ be a connected weighted graph. Recall the Kruskal algorithm provides efficiently the cost of the minimum spanning tree, $MST(G)$. Furthermore, the cutset with minimum-cost, $m(G)$, is obtained using the Ford-Fulkerson algorithm. 
Therefore, the following corollary of Theorem~\ref{assign} holds for graphs:
\begin{corollary}
A graph is separable iff there exists a feasible assignment $\{n_i\}_{i=1,\ldots,N}$ to the links such that $S<MST(G)+m(G)$, where $MST(G)$ is the cost of the minimum spanning tree, $m(G)$ the mincut with minimum capacity, 
and $S=\sum_{i=1}^{N}n_i$ the sum of the link weights. 
\end{corollary}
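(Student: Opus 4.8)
The plan is to view the all-terminal reliability model as a particular SBS and then invoke Theorem~\ref{assign} directly, translating its abstract quantities $c(\sigma)$ and $c(\omega)$ into their graph-theoretic counterparts. First I would fix the SBS whose ground set is the edge set $E=\{1,\ldots,m\}$ of $G$ and whose structure function sends a state (a subset of edges, i.e. a spanning subgraph) to $1$ precisely when that spanning subgraph is connected. This structure is monotone, since adding edges preserves connectivity, so the hypotheses of Theorem~\ref{assign} are met and the separability of the graph coincides with the separability of this SBS. A graph separability assignment is then exactly a non-negative cost assignment $\{n_i\}$ to the links in the sense of the theorem.

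The heart of the argument is to identify the two minimum-cost quantities. For the minimum-cost pathset, I would argue that since the weights $n_i$ are non-negative, the cheapest connected spanning subgraph is a spanning tree: any connected spanning subgraph contains a spanning tree of no greater cost, and every spanning tree is itself a pathset. Hence the minimum-cost pathset equals the minimum spanning tree, $c(\sigma)=MST(G)$, which Kruskal's algorithm computes efficiently. For the minimum-cost cutset, I would recall that the cost of a cutset is measured on its \emph{failed} edges, $c(\omega)=\sum_{i:\omega_i=0}n_i$; minimizing this over all disconnected spanning subgraphs is exactly the problem of deleting a minimum-weight set of edges that disconnects $G$, i.e. the global minimum cut $m(G)$, obtainable via the Ford-Fulkerson (max-flow/min-cut) machinery. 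Thus $c(\omega)=m(G)$.

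With these two identifications in hand, the conclusion follows by substitution. Theorem~\ref{assign} states that the system is separable if and only if some non-negative assignment satisfies $S<c(\sigma)+c(\omega)$; replacing $c(\sigma)$ by $MST(G)$ and $c(\omega)$ by $m(G)$ yields precisely the inequality $S<MST(G)+m(G)$, as claimed, and the same assignment witnesses both characterizations.

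The only delicate step is the cutset identification, because of the sign convention: one must take care that $c(\omega)$ counts the weight of the edges that are \emph{off} rather than on, so that its minimization genuinely corresponds to a minimum cut rather than to some maximum-weight object, and that the relevant cut is the \emph{global} minimum cut (disconnecting $G$ into any two parts) rather than a fixed source--terminal cut. Once this convention is respected, both the pathset-to-$MST$ and cutset-to-mincut correspondences are immediate consequences of the non-negativity of the weights, so no work is required beyond citing Theorem~\ref{assign}.
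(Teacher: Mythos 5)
Your proposal is correct and follows essentially the same route as the paper, which derives the corollary by specializing Theorem~\ref{assign} to the all-terminal model, identifying the minimum-cost pathset with $MST(G)$ (via Kruskal) and the minimum-cost cutset with the global mincut $m(G)$ (via Ford--Fulkerson). You simply spell out the details---in particular the sign convention that $c(\omega)$ is measured on the failed edges---that the paper leaves implicit in the paragraph preceding the corollary.
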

For example, trees and elementary cycles are separable graphs. Indeed, if $T_n$ is a tree with $n$ nodes, 
a feasible assignment is an identical unit-cost to all the links, since in that case $MST(T_n)=n-1$, 
$m(G)=1$ and the global sum is $S=n-1 < MST(T_n)+m(T_n)$. Analogously, if $C_n$ denotes the elementary cycle with $n$ nodes, 
then $S=n<(n-1)+2=MST(C_n)+m(C_n)$, and the same unit-cost assignment works.

Intuitively, if the graph is dense enough, 
one would expect that the combined cost of a minimum spanning tree and mincut should not exceed $S$, the global cost of the graph.

Our first result deals with the extremal case of complete graphs:

\begin{proposition}
Complete graphs $(K_n)_{n \geq 4}$ are nonseparable.
\end{proposition}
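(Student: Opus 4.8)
The plan is to invoke the Hahn--Banach characterization of Theorem~\ref{hahn}: to prove that $K_n$ is nonseparable it suffices to exhibit a single point lying in $CH(\mathcal{P}) \cap CH(\mathcal{C})$, where $\mathcal{P}$ is the set of connected spanning subgraphs and $\mathcal{C}$ the set of disconnected ones, each identified with its indicator vector in $\{0,1\}^{E}$ with $E$ the edge set of $K_n$. Since the automorphism group of $K_n$ is the full symmetric group $S_n$, which acts transitively on the edges, it is natural to search for a common point of the symmetric form $p\,\mathbf{1}$ for a suitable scalar $p \in [0,1]$; this collapses the whole problem onto that single scalar.

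First I would determine, for each hull, the range of $p$ for which $p\,\mathbf{1}$ belongs to it. For $CH(\mathcal{P})$: the whole graph $\mathbf{1}$ is a pathset, and the uniform average of the indicator vectors of all spanning trees of $K_n$ is a convex combination of pathsets; by edge-transitivity every edge has the same marginal $p_{T}$, and summing marginals gives $p_{T}\binom{n}{2} = n-1$, whence $p_{T} = \tfrac{2}{n}$. By convexity of the hull the entire segment joining $\tfrac{2}{n}\mathbf{1}$ and $\mathbf{1}$ lies in $CH(\mathcal{P})$, so $p\,\mathbf{1} \in CH(\mathcal{P})$ for every $p \in [\tfrac{2}{n},1]$. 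For $CH(\mathcal{C})$: the empty subgraph $\mathbf{0}$ is a cutset, and the uniform average of the $n$ disconnected subgraphs obtained by isolating one vertex and keeping $K_{n-1}$ on the rest is a convex combination of cutsets whose common marginal is $\tfrac{n-2}{n}$, since each edge survives in $n-2$ of the $n$ choices. Hence $p\,\mathbf{1} \in CH(\mathcal{C})$ for every $p \in [0,\tfrac{n-2}{n}]$.

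It then remains to intersect the two ranges: $[\tfrac{2}{n},1] \cap [0,\tfrac{n-2}{n}] \neq \emptyset$ exactly when $\tfrac{2}{n} \le \tfrac{n-2}{n}$, i.e.\ when $n \ge 4$; for such $n$ the value $p=\tfrac{1}{2}$ already lies in both ranges, so $\tfrac{1}{2}\mathbf{1} \in CH(\mathcal{P}) \cap CH(\mathcal{C})$ and Theorem~\ref{hahn} yields nonseparability. The step I expect to be the main obstacle is the clean justification of the spanning-tree marginal $\tfrac{2}{n}$: it should rest on the symmetry argument (transitivity of $S_n$ on edges forces equal marginals) combined with the count $\sum_{e} p_{T} = n-1$, rather than on any explicit enumeration via Cayley's formula. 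As a consistency check, the same computation explains the boundary: for $n \le 3$ the two ranges are disjoint, in agreement with complete graphs on at most three vertices being separable.
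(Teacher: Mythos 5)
Your proof is correct, but it takes a genuinely different route from the paper. The paper works with the cost-assignment characterization (Theorem~\ref{assign} and its graph corollary): it fixes an arbitrary weight assignment on the edges of $K_n$, splits the edge set into a spanning star $K_{1,n-1}$ and its complement, and argues that $S = c(K_{1,n-1}) + c(K_{1,n-1}^{C}) \geq MST(K_n) + m(K_n)$, so no feasible assignment can exist. You instead invoke the Hahn--Banach characterization (Theorem~\ref{hahn}) and exhibit the explicit common point $\tfrac{1}{2}\mathbf{1} \in CH(\mathcal{P}) \cap CH(\mathcal{C})$, obtained by averaging all spanning trees (marginal $\tfrac{2}{n}$ per edge, by edge-transitivity of $S_n$ plus $\sum_e p_T = n-1$) on the pathset side and all single-vertex isolations (marginal $\tfrac{n-2}{n}$) on the cutset side, then interpolating with $\mathbf{1}$ and $\mathbf{0}$ respectively. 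All steps check out: the spanning trees and the full graph are pathsets, the empty graph and the vertex-isolation states are cutsets, and $\tfrac{2}{n} \leq \tfrac{1}{2} \leq \tfrac{n-2}{n}$ exactly when $n \geq 4$. Your method is in fact the same convex-hull technique the paper deploys later for Monma graphs (Lemma~\ref{monma}), and it has the advantage of being entirely self-contained and symmetric, avoiding any case analysis of weight assignments; it also cleanly explains why the threshold sits at $n=4$. What it does not give you, and what the paper's assignment-based argument is better suited for, is the quantitative ``$S \geq MST + m$'' viewpoint that drives the hereditary lemmas (Lemmas~\ref{out} and~\ref{in}) used elsewhere in Section~\ref{all-terminal}.
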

\begin{proof}
Consider an arbitrary assignment $\{n_i\}_{i=1,\ldots,n(n-1)/2}$ to the links of $K_n$, and an arbitrary star-graph $K_{1,n}$ contained in $K_n$. Since $K_{1,n}$ is connected, its cost is greater than, or 
equal to the cost of the minimum spanning tree, so, $c(K_{1,n}) \geq MST(K_n)$. Furthermore, the complementary links of 
$K_{1,n}$, or the complementary graph $K_{1,n}^C$, is a cutset (it isolates a single node), so 
the cost must exceed the mincut: $c(K_{1,n}^C) \geq m(K_n)$. 
But then, the global cost is $c(K_n)=c(K_{1,n})+c(K_{1,n}^C) \geq MST(K_n)+m(K_n)$. 
The conclusion is that $S=c(K_n) \geq MST(K_n)+m(K_n)$ for any feasible assignment, and $K_n$ is nonseparable.  
\end{proof}

With the following lemmas, we will present a hereditary property of separable graphs, stated in Theorem~\ref{hereditary}.  
Consider a simple connected graph $G=(V,E)$. We will consider two different link additions:
\begin{itemize}
\item We denote $G_{in}=G+e_{in}$ to the resulting graph after the addition of an internal link $e_{in}=\{u_1,u_2\}$, where $u_1,u_2\in V$.
\item We denote $G_{out}=G+e_{out}$ to the resulting graph after the addition of an external link $e_{out}=\{u_1,u_2\}$, where $u_1\in V$ but $u_2 \notin V$.
\end{itemize}
Observe that $G+e_{in}$ and $G$ share an identical node-set $V$, while the node-set for $G+e_{out}$ is $V \cup \{u_2\}$. 

\begin{lemma} \label{out}
If $G$ is nonseparable then $G_{out}$ is nonseparable.
\end{lemma}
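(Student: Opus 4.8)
The plan is to argue by contrapositive, leaning on the graph characterization in the corollary of Theorem~\ref{assign}: a connected weighted graph is separable iff some non-negative weighting of the links satisfies $S < MST(G)+m(G)$. Accordingly, I would assume that $G_{out}$ is separable and manufacture, out of the witnessing weighting for $G_{out}$, a witnessing weighting for $G$, contradicting the hypothesis that $G$ is nonseparable.

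First I would fix a feasible assignment on the links of $G_{out}$ realizing the strict inequality $S_{out} < MST(G_{out}) + m(G_{out})$, and let $w \ge 0$ denote the weight it places on the pendant link $e_{out}=\{u_1,u_2\}$. Restricting this assignment to the links of $G$ yields the candidate weighting I want to analyze, and by construction $S_{out} = S + w$, where $S$ is the total weight over the links of $G$.

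Next I would record two structural facts that exploit the fact that $u_2$ is a degree-one vertex in $G_{out}$. Since $e_{out}$ is the only link incident to $u_2$, every spanning tree of $G_{out}$ must contain $e_{out}$, and deleting $e_{out}$ gives a weight-preserving bijection between spanning trees of $G_{out}$ and spanning trees of $G$; hence $MST(G_{out}) = MST(G) + w$. For the mincut, any cut of $G$ extends to a cut of $G_{out}$ of equal capacity by placing $u_2$ on the same side as $u_1$, so that $e_{out}$ does not cross it; this gives the inequality $m(G_{out}) \le m(G)$, which is all I will need.

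Finally I would substitute these into the separability inequality for $G_{out}$, obtaining $S + w < \bigl(MST(G)+w\bigr) + m(G_{out})$; cancelling $w$ gives $S < MST(G) + m(G_{out}) \le MST(G) + m(G)$. Thus the restricted weighting witnesses the separability of $G$, completing the contrapositive. The only step that demands care is the mincut comparison, but the argument never requires the exact value of $m(G_{out})$ — only the bound $m(G_{out}) \le m(G)$, which is immediate because a cut inherited from $G$ never needs to separate the degree-one vertex $u_2$ from its unique neighbour $u_1$.
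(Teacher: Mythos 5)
Your proposal is correct and follows essentially the same route as the paper's proof: both restrict the witnessing assignment from $G_{out}$ to $G$, use $MST(G_{out}) = MST(G) + w$ for the pendant link, and bound $m(G_{out}) \leq m(G)$ (the paper via $m(G_{out}) = \min\{m(G), n_{N+1}\}$, you via extending cuts of $G$ by placing $u_2$ with $u_1$). Your version merely supplies the structural justifications that the paper leaves implicit.
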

\begin{proof}
Suppose for a moment that there exists a feasible assignment $\{n_i\}_{i=1,\ldots,N+1}$ for $G_{out}$. 
Then:
\begin{align*}
(\sum_{i=1}^{N}n_i)+n_{N+1} &< MST(G_{out})+m(G_{out})\\
                            &= MST(G)+n_{N+1}+\min\{m(G),n_{N+1}\}\\
                            &\leq MST(G)+n_{N+1}+m(G),
\end{align*}
and $\{n_i\}_{i=1,\ldots,N}$ would be a feasible assignment for $G$, which is a contradiction. 
Therefore, $G_{out}$ is nonseparable.
\end{proof}

\begin{lemma} \label{in}
If $G$ is nonseparable then $G_{in}$ is nonseparable.
\end{lemma}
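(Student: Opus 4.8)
The plan is to argue by contradiction, following the template of Lemma~\ref{out}. Suppose $G_{in}$ were separable. By the graph characterization (the Corollary of Theorem~\ref{assign}) there would be an assignment $\{n_i\}_{i=1,\ldots,N+1}$ of non-negative reals to the links of $G_{in}$ with $S_{in}<MST(G_{in})+m(G_{in})$, where $S_{in}=\sum_{i=1}^{N+1}n_i$ and $n_{N+1}$ denotes the weight placed on the new internal edge $e_{in}$. I would then show that the restricted assignment $\{n_i\}_{i=1,\ldots,N}$ (forgetting $e_{in}$, which keeps all weights non-negative) satisfies $S<MST(G)+m(G)$ with $S=\sum_{i=1}^{N}n_i$, so that $G$ is separable, contradicting the hypothesis.

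The core of the argument consists of two monotonicity estimates relating the invariants of $G_{in}$ to those of $G$. First, $MST(G_{in})\le MST(G)$: any spanning tree of $G$ is still a spanning tree of $G_{in}$, so adding a link cannot raise the minimum spanning tree cost. Second, $m(G_{in})\le m(G)+n_{N+1}$: a mincut of $G$ remains a cut of $G_{in}$, and its capacity grows by exactly $n_{N+1}$ if $e_{in}$ crosses it and by $0$ otherwise, so the minimum cut capacity can increase by at most $n_{N+1}$.

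Combining these estimates, since $S=S_{in}-n_{N+1}$, I would compute
\[
S = S_{in}-n_{N+1} < MST(G_{in})+m(G_{in})-n_{N+1} \le MST(G)+\bigl(m(G)+n_{N+1}\bigr)-n_{N+1}=MST(G)+m(G),
\]
which exhibits $\{n_i\}_{i=1,\ldots,N}$ as a witness of separability for $G$, the desired contradiction.

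The subtlety, and the main point to get right, is the direction in which the two invariants move. In the external-link case of Lemma~\ref{out} the spanning-tree cost strictly \emph{increases} (the pendant vertex forces the new link into every spanning tree) while the mincut is \emph{capped} by $\min\{m(G),n_{N+1}\}$. Here the situation is reversed: the spanning-tree cost can \emph{decrease} and the mincut can \emph{increase}. The argument still closes only because the increase of the mincut is bounded by $n_{N+1}$, which is precisely the weight discarded from $S$ when passing back to $G$; establishing this tight bound $m(G_{in})\le m(G)+n_{N+1}$ with the correct constant is the step that requires care.
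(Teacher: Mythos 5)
Your proposal is correct and follows the same argument as the paper: assume a feasible assignment for $G_{in}$, use $MST(G_{in})\le MST(G)$ and $m(G_{in})\le m(G)+n_{N+1}$ to get $MST(G_{in})+m(G_{in})\le MST(G)+m(G)+n_{N+1}$, and conclude that the restricted assignment witnesses separability of $G$, a contradiction. The paper states the combined inequality in one step, while you usefully spell out the two monotonicity estimates, but the reasoning is identical.
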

\begin{proof}
Suppose for a moment that there exists a feasible assignment $\{n_i\}_{i=1,\ldots,N+1}$ for $G_{in}$. Then:
\begin{align*}
(\sum_{i=1}^{N}n_i)+n_{N+1} &< MST(G_{in}) + m(G_{in})\\
                                      &\leq MST(G)+m(G)+n_{N+1},
\end{align*}
and $\{n_i\}_{i=1,\ldots,N}$ would be a feasible assignment for $G$, which is a contradiction. Therefore, $G_{in}$ is nonseparable. 
\end{proof}
Observe that Lemma~\ref{in} informally states that graphs with more density are nonseparable. Using the contrapositive of 
Lemmas~\ref{out}~and~\ref{in} we obtain the following:
\begin{theorem}\label{hereditary}
Separability is a hereditary property in graphs.
\end{theorem}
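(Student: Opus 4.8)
The plan is to obtain Theorem~\ref{hereditary} as an immediate consequence of the two preceding lemmas, read in their contrapositive form. Lemma~\ref{out} says that adding an external link to a nonseparable graph keeps it nonseparable; contrapositively, if $G_{out}=G+e_{out}$ is separable then $G$ is separable, so \emph{deleting a pendant vertex together with its unique incident link preserves separability}. Likewise Lemma~\ref{in}, read contrapositively, says that if $G_{in}=G+e_{in}$ is separable then $G$ is separable, so \emph{deleting an internal link preserves separability}. These are precisely the two elementary reductions by which a graph can be shrunk: removing a link whose two endpoints both remain, and removing a vertex of degree one.

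First I would fix the meaning of the statement: a property is hereditary when every subgraph $H\subseteq G$ of a graph $G$ enjoying the property also enjoys it. For finite graphs it then suffices to prove that separability survives each of the two elementary reductions, because any $H$ is obtained from $G$ by a finite sequence of them. To build such a sequence for a given $H$, I would process the vertices of $V(G)\setminus V(H)$ one at a time: each such vertex is first brought down to degree one by removing its incident links (every one an \emph{internal} deletion, since its other endpoint still survives) and is then deleted as a pendant vertex (an \emph{external} deletion); once all superfluous vertices are gone, the remaining links of $E(G)\setminus E(H)$ are removed as internal links. Invoking the contrapositives of Lemmas~\ref{out} and~\ref{in} along this chain, an easy induction shows that every intermediate graph --- and hence $H$ --- is separable.

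The step I expect to be delicate is not conceptual but combinatorial: I must order the deletions so that no vertex of $V(G)\setminus V(H)$ is ever stranded at degree zero before it is removed, and I must account for the possibility that an intermediate graph becomes disconnected, in which case the lemmas (whose proofs presume a connected graph with a well-defined $MST(G)$ and mincut) do not literally apply. Both difficulties dissolve once I record the elementary observation that \emph{a disconnected graph is automatically separable}: assigning unit cost to every link and choosing any threshold $\alpha$ strictly larger than the total cost $S$ makes the inequality $c(E^{\prime})\geq\alpha$ fail for every spanning subgraph, while no spanning subgraph is connected either, so the defining equivalence holds vacuously. Consequently, whenever a reduction produces a disconnected graph the separability of that graph is immediate and the induction is never broken, and whenever it produces a connected graph the relevant lemma applies verbatim. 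Chaining these cases yields that separability passes to every subgraph, which is exactly the assertion of Theorem~\ref{hereditary}.
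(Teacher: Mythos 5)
Your proposal is correct and follows essentially the same route as the paper: both arguments read Lemmas~\ref{out} and~\ref{in} contrapositively and reduce an arbitrary subgraph to $G$ by a chain of internal-link deletions and pendant-vertex deletions. You are in fact somewhat more careful than the paper's terse proof, since you explicitly order the deletions and dispose of the disconnected intermediate graphs (via the observation that disconnected graphs are trivially separable, which the paper only records later, in the proof of Theorem~\ref{characterize}).
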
 
\begin{proof}
Reading the contrapositive of Lemma~\ref{in}, we know that the deletion of one or several links from a separable graph 
is also separable. By Lemma~\ref{out}, we also know that a node-deletion in a separable graph (with the intermediate deletion of links using Lemma~\ref{in}) is also separable. Combining node and link deletions, an arbitrary subgraph is obtained, 
and it must be separable as well. 
\end{proof}

\begin{lemma}\label{add}
If $G$ is separable, $G_{out}$ is also separable.
\end{lemma}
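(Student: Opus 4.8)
The plan is to prove the statement constructively, by taking a feasible weight assignment that witnesses the separability of $G$ and extending it to $G_{out}$ with a single extra weight on the pendant edge $e_{out}$. Throughout I will use the graph characterization stated as the corollary to Theorem~\ref{assign}: a connected graph is separable if and only if some non-negative assignment satisfies $S < MST + m$. Since $G$ is separable, I may fix a feasible assignment $\{n_i\}_{i=1,\ldots,N}$ on $E(G)$, so that $S_G < MST(G) + m(G)$, and my only freedom is the value $n_{N+1}$ to place on $e_{out}$.

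First I would record how $MST$ and $m$ behave under the external addition, exactly as in the proof of Lemma~\ref{out}. Because $e_{out}$ is incident to the new vertex $u_2 \notin V$, it is the unique edge of $G_{out}$ touching $u_2$; hence every spanning tree of $G_{out}$ must contain it, which gives $MST(G_{out}) = MST(G) + n_{N+1}$. For the global mincut I would run a short case analysis on the side of the bipartition containing $u_2$: isolating $u_2$ costs $n_{N+1}$; keeping $u_1$ and $u_2$ on the same side reproduces a cut of $G$ and so costs at least $m(G)$; and separating them forces $e_{out}$ across the cut, costing at least $m(G) + n_{N+1}$. Taking the minimum yields $m(G_{out}) = \min\{m(G), n_{N+1}\}$.

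Next I would simply choose $n_{N+1} \geq m(G)$ (concretely $n_{N+1} = m(G)$, which is non-negative). With this choice $\min\{m(G), n_{N+1}\} = m(G)$, so the two identities above combine to $MST(G_{out}) + m(G_{out}) = MST(G) + m(G) + n_{N+1}$, while the total cost is $S_{G_{out}} = S_G + n_{N+1}$. Substituting both expressions into the target inequality $S_{G_{out}} < MST(G_{out}) + m(G_{out})$ and cancelling the common $n_{N+1}$ reduces it to exactly $S_G < MST(G) + m(G)$, which is the hypothesis. Therefore $\{n_i\}_{i=1,\ldots,N+1}$ is feasible for $G_{out}$, and $G_{out}$ is separable.

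The only genuine subtlety, and hence the step to nail down carefully, is the mincut identity $m(G_{out}) = \min\{m(G), n_{N+1}\}$: one must verify that taking $n_{N+1}$ large enough forces the minimizing cut to be the \emph{internal} one inherited from $G$ rather than the one isolating the new leaf, so that $m(G)$ rather than $n_{N+1}$ survives in the bound. Everything after that is an algebraic cancellation. It is worth remarking that this result is the exact complement of Lemma~\ref{out}: together they show that external additions preserve both separability and nonseparability, so that attaching pendant vertices leaves the separability status of a graph unchanged.
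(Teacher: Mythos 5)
Your proof is correct and follows essentially the same route as the paper's: extend the feasible assignment with a weight $n_{N+1}$ at least $m(G)$ on the pendant edge, use $MST(G_{out})=MST(G)+n_{N+1}$ and $m(G_{out})=\min\{m(G),n_{N+1}\}$, and cancel $n_{N+1}$. The only difference is that you justify the $MST$ and mincut identities explicitly (and allow $n_{N+1}=m(G)$ where the paper takes a strict inequality), which is a harmless refinement.
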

\begin{proof}
Consider a feasible assignment $\{n_i\}_{i=1,\ldots,N}$ for $G$, where $S<MST(G)+m(G)$ holds. Let us consider an extended 
assignment with $n_{N+1}$ for the external link, such that $n_{N+1}>m(G)$. Then:
\begin{align*}
S+n_{N+1} &< (MST(G)+n_{N+1})+m(G)\\
               &=  MST(G_{out}) + \min\{m(G),n_{N+1}\}\\
               &=  MST(G_{out})+m(G_{out}),
\end{align*}
and $\{n_i\}_{i=1,\ldots,N+1}$ is a feasible assignment for $G_{out}$. 
\end{proof}

\begin{corollary}\label{arborescencia}
Cycles with arborescences are separable graphs
\end{corollary}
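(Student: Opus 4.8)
The plan is to build the graph incrementally, starting from the cycle and attaching the arborescences one leaf at a time, invoking Lemma~\ref{add} at each step. First I would make precise what a \emph{cycle with arborescences} is: a graph obtained from an elementary cycle $C_n$ by rooting a (possibly trivial) tree at each of its nodes, so that the only cycle present is $C_n$ itself. The base object, the elementary cycle $C_n$, is already known to be separable; indeed the unit-cost assignment gives $S=n<(n-1)+2=MST(C_n)+m(C_n)$, exactly as observed above.

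The key structural observation is that every attached tree can be grown from its root by successively adding leaves, and that each leaf addition is precisely an external-link addition $G\mapsto G_{out}=G+e_{out}$ in the sense of the preceding definitions: the new node $u_2\notin V$ is joined by a single edge $e_{out}=\{u_1,u_2\}$ to an existing node $u_1\in V$. Thus I would fix an ordering of all the arborescence edges in which every edge, at the moment it is inserted, connects a fresh leaf to the current graph. This yields a finite sequence $C_n=G_0,G_1,\ldots,G_k=G$ with $G_{j+1}=(G_j)_{out}$ for each $j$.

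Then I would proceed by induction on $j$. The base case $G_0=C_n$ is separable. For the inductive step, assuming $G_j$ is separable, Lemma~\ref{add} guarantees that $G_{j+1}=(G_j)_{out}$ is separable as well. After $k$ steps we conclude that $G=G_k$ is separable, which is the desired claim.

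I do not expect a serious obstacle here, since the content is carried entirely by Lemma~\ref{add} together with the separability of the base cycle. The only point requiring care is the structural claim that a cycle-with-arborescences decomposes as a sequence of external-link additions over $C_n$, that is, that attaching the trees never forces an internal link. This holds because each arborescence is acyclic and meets the rest of the graph only at its root, so every non-root node is introduced together with the unique edge joining it to its parent, and that edge is external precisely at the moment of insertion. Once this ordering is established, the induction closes immediately.
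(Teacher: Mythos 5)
Your proposal is correct and follows essentially the same route as the paper: start from the separable elementary cycle and grow the hanging trees by repeated external-link additions, invoking Lemma~\ref{add} at each step. You simply make explicit the leaf-ordering and the induction that the paper leaves implicit.
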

\begin{proof}
We know that elementary cycles are separable. The result follows by the addition of one or several trees hanging to 
different nodes from the first cycle. Supported by Lemma~\ref{add}, the separability is preserved by the addition of those links.
\end{proof}

Figure~\ref{ejemploMonma} depicts Monma graphs. These graphs have two degree-3 nodes connected by 3 node-disjoint paths. Every proper subgraph of a Monma graph is either a unicyclic graph, a tree, or a disconnected graph. Therefore, every proper subgraph of a Monma graph is separable. We will see that Monma graphs are minimally nonseparable graphs. 
 Clyde Monma et al. used these graphs to design  minimum cost biconnected metric networks~\cite{Monma1990}. Some (but not all) of these graphs also attain the maximum reliability among all the graphs with $p$ nodes and $q=p+1$ 
links~\cite{Boesch1991}.

\begin{figure}[htb!]\centering{
 \begin{tikzpicture}  [scale=1.5,>=stealth,shorten >=0.1pt, auto, semithick,
                     nodot/.style={circle,draw=black,minimum size=10pt, inner sep=0pt, font=\scriptsize, fill=gray!50},                     
                     nodop/.style={circle,draw=black,minimum size=10pt, inner sep=0pt, font=\scriptsize},
                     every to/.style={draw,thin,black}]
                     
  \begin{scope}[xshift=0cm,scale=1]

\node [nodop] (s) at (1,1) {$u$};
\node [nodop] (t) at (5,1) {$v$};
\node [nodot] (n02) at (2,0) {$c_1$};
\node [nodot] (n03) at (3,0) {$c_2$};
\node [nodot] (n04) at (4,0) {$c_{l_3}$};
\node [nodot] (n12) at (2,1) {$b_1$};
\node [nodot] (n13) at (3,1) {$b_2$};
\node [nodot] (n14) at (4,1) {$b_{l_2}$};
\node [nodot] (n22) at (2,2) {$a_1$};
\node [nodot] (n23) at (3,2) {$a_2$};
\node [nodot] (n24) at (4,2) {$a_{l_1}$};
\path (s) edge (n02);
\path (s) edge (n12);
\path (s) edge (n22);
\path (t) edge (n04);
\path (t) edge (n14);
\path (t) edge (n24);
\path[dotted] (n03) edge (n04);
\path[dotted] (n13) edge (n14);
\path[dotted] (n23) edge (n24);
\path (n02) edge (n03);
\path (n12) edge (n13);
\path (n22) edge (n23);
\end{scope}
\end{tikzpicture}}\caption{Monma graph $M_{l_1+1,l_2+1,l_3+1}$.}\label{ejemploMonma}
\end{figure}
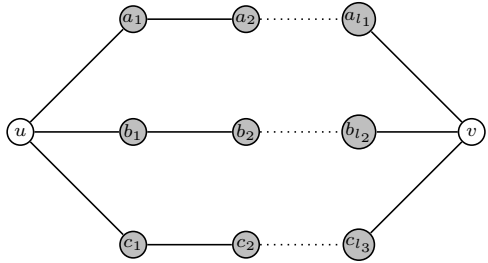

\begin{lemma}\label{monma}
Monma graphs are nonseparable
\end{lemma}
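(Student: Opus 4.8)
The plan is to prove the contrapositive form of separability directly: I will show that for \emph{every} non-negative assignment of weights to the links of a Monma graph $G$ one has $S \ge MST(G)+m(G)$, which by the graph separability criterion (the Corollary to Theorem~\ref{assign}) is precisely the statement that $G$ is nonseparable. I would work with the three internally disjoint $u$--$v$ paths $P_1,P_2,P_3$ making up $G$, of edge-lengths $\lambda_1,\lambda_2,\lambda_3\ge 1$. Because $G$ is simple, two distinct paths cannot both be a single link (that would be a multi-edge), so at most one path has length $1$ and hence at least two paths have length $\ge 2$. For a fixed assignment let $W_j$ be the total weight of $P_j$, so that $S=W_1+W_2+W_3$, let $M_j$ be the heaviest link on $P_j$, and write $M_{(1)}\le M_{(2)}\le M_{(3)}$ for these three maxima sorted in increasing order.

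First I would pin down the minimum spanning tree. Since $G$ has $|V|=2+\sum_j(\lambda_j-1)$ nodes and $|E|=\sum_j\lambda_j$ links, a spanning tree is obtained by deleting exactly two links. Deleting two links of the \emph{same} path would strand the interior segment between them, so the two deleted links must lie on two \emph{distinct} paths, leaving the third path intact. The lightest spanning tree therefore deletes the two heaviest deletable links, one from each of two paths, i.e. it removes the two largest path-maxima. Hence $MST(G)=S-M_{(2)}-M_{(3)}$, or equivalently $S-MST(G)=M_{(2)}+M_{(3)}$.

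Next I would exhibit a single cut of weight at most $M_{(2)}+M_{(3)}$, which bounds $m(G)$ from above. Among the (at least two) paths of length $\ge 2$, the smallest path-maximum is at most $M_{(2)}$: deleting any one of the three maxima from the multiset $\{M_{(1)},M_{(2)},M_{(3)}\}$ leaves two values whose minimum is at most $M_{(2)}$. Choose such a path $P_k$ with $M_k\le M_{(2)}$ and let $e,e'$ be its two lightest links. Removing $e$ and $e'$ severs the interior segment of $P_k$ lying between them from both $u$ and $v$, so $\{e,e'\}$ is a disconnecting set, i.e. a cut. Each of $e,e'$ has weight at most $M_k$, so its total weight is at most $2M_k\le 2M_{(2)}\le M_{(2)}+M_{(3)}$, giving $m(G)\le M_{(2)}+M_{(3)}$.

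Combining the two steps yields $MST(G)+m(G)\le (S-M_{(2)}-M_{(3)})+(M_{(2)}+M_{(3)})=S$ for every assignment, so no feasible assignment can meet the strict inequality $S<MST(G)+m(G)$ required for separability, and $G$ is nonseparable. The main obstacle I anticipate is the structural bookkeeping of the first step, namely arguing rigorously that every spanning tree deletes exactly two links from two distinct paths (including the degenerate case where one path is a single link, whose edge is still freely deletable), together with guaranteeing that a length-$\ge 2$ path with maximum weight $\le M_{(2)}$ is always available for the cut; both hinge on the simplicity of $G$ forcing at least two paths of length $\ge 2$.
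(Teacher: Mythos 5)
Your proof is correct, but it follows a genuinely different route from the paper. You work with the weight-assignment criterion of Theorem~\ref{assign} (in its graph form $S < MST(G)+m(G)$) and show by a structural analysis that every non-negative assignment satisfies $MST(G)+m(G)\le S$: your identification of $S-MST(G)$ with the sum of the two largest path-maxima is sound (a spanning tree of a Monma graph deletes exactly two links, necessarily from distinct paths), and your exhibited two-link cut on a length-$\ge 2$ path with path-maximum at most the median $M_{(2)}$ correctly bounds $m(G)$; the degenerate case of one length-one path is handled by simplicity of $G$. The paper instead invokes the convex-hull characterization (Theorem~\ref{hahn}) and produces a single explicit identity, $\tfrac12(1_{e_1,e_2}+1_{e_3,e_4})=\tfrac14(1_{e_1,e_3}+1_{e_1,e_4}+1_{e_2,e_3}+1_{e_2,e_4})$, exhibiting a common point of $CH(\mathcal{C})$ and $CH(\mathcal{P})$ using the first two links of two of the three paths. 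The paper's argument is shorter and purely algebraic, needing no analysis of spanning trees or minimum cuts, while yours is in the same spirit as the paper's own treatment of the Butterfly and Glasses graphs (which do argue via $MST$ and mincut), is self-contained at the level of the graph criterion, and makes explicit exactly which structural features of Monma graphs force nonseparability. Both are valid proofs; yours trades the elegance of the convex-combination identity for a more mechanical but arguably more transferable argument.
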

\begin{proof} 
Consider an arbitrary order for the links of Monma graph, and the rule $\phi(\sigma)=1$  
iff the Monma subgraph given by the links in subgraph $\sigma$ is connected. 
We will show that the convex hull of pathsets and cutsets meet at some point, and the result is established by Theorem~\ref{hahn}. 
Consider the four links $e_1=\{u,a_1\}$, $e_2=\{a_1,a_2\}$, $e_3=\{u,b_1\}$ and $e_4=\{b_1,b_2\}$ from Figure~\ref{ejemploMonma}. 
Let $1_{e_i,e_j}$ denote the binary word that is set to $1$ in all the bits but $0$ in the positions corresponding to 
the links $e_i$ and $e_j$. Consider the following identity:
\begin{equation}
\frac{1}{2}(1_{e_1,e_2}+1_{e_3,e_4}) = \frac{1}{4}(1_{e_1,e_3}+1_{e_1,e_4}+1_{e_2,e_3}+1_{e_2,e_4})
\end{equation} 
On one hand, we have a convex combination of cutsets. On the other, a convex combination of pathsets. 
By Theorem~\ref{hahn}, Monma graphs are nonseparable.
\end{proof}

Recall that a node $v$ in a graph $G$ is a cut-point if $G-v$ has more components than $G$. 
A connected graph is biconnected if it has no cut-points. The addition of an ear in a graph $G$ is the addition of an external elementary path between two different nodes from $G$. Whitney characterization theorem for biconnected graphs asserts that there exists an  ear decomposition of all biconnected graphs, such that $G=C_{s} \cup H_{1} \cup H_{2} \cup \dots \cup H_{r}$, 
$C_s$ is an elementary cycle and $H_{i}$ is the addition of an ear to the previous graph~\cite{Whitney}. A proof using modern terminology is given in the classical graph theoretical book~\cite{diestel2006graph}. This structural characterization 
of biconnected graphs leads us immediately to the following:
\begin{theorem}\label{jaja1}
Biconnected graphs are nonseparable, except for elementary cycles.
\end{theorem}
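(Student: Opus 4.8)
The plan is to split the statement along its two cases and dispose of the exceptional one at once. Elementary cycles $C_n$ have already been shown to be separable (the unit-cost assignment gives $S = n < (n-1)+2 = MST(C_n)+m(C_n)$), so they genuinely lie outside the claim. For every biconnected graph $G$ that is \emph{not} an elementary cycle, the goal is to exhibit a Monma subgraph inside $G$ and then transfer nonseparability upward. Since Monma graphs are nonseparable by Lemma~\ref{monma}, and separability is hereditary by Theorem~\ref{hereditary}, the contrapositive of heredity immediately yields that any graph containing a nonseparable subgraph is itself nonseparable. Thus it suffices to locate a single Monma (theta) subgraph.

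To find such a subgraph I would invoke the Whitney ear decomposition recalled just above the statement: $G = C_s \cup H_1 \cup \dots \cup H_r$, where $C_s$ is an elementary cycle and each $H_i$ is an ear. If $r=0$ then $G=C_s$ is an elementary cycle, contradicting our hypothesis; hence $r \geq 1$ and at least one ear $H_1$ is present. The ear $H_1$ is an external elementary path joining two distinct nodes $u,v \in C_s$ whose internal vertices are new. Consequently $C_s \cup H_1$ consists of three internally node-disjoint paths between $u$ and $v$: the two arcs of the cycle together with the ear itself. This is precisely a pair of degree-$3$ nodes joined by three node-disjoint paths, that is, a Monma graph (with one path possibly reduced to a single edge when $H_1$ is a chord). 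Hence $C_s \cup H_1 \subseteq G$ is the desired Monma subgraph.

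Putting the pieces together, $G$ contains the nonseparable subgraph $C_s \cup H_1$, so by the hereditary property $G$ cannot be separable; combined with the separability of cycles, this establishes that the biconnected graphs are exactly the nonseparable ones apart from the elementary cycles. An equivalent route, closer to the machinery of Lemmas~\ref{out} and~\ref{in}, would instead build $G$ outward from the Monma base $C_s \cup H_1$ by adding the remaining ears $H_2,\dots,H_r$ one at a time: each ear decomposes into a run of external link additions (introducing its internal vertices) followed by a single internal link addition, and Lemmas~\ref{out} and~\ref{in} preserve nonseparability at every step.

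I expect the main obstacle to be bookkeeping rather than depth. One must confirm that $C_s \cup H_1$ really is a theta graph under the definition of an ear, namely that the two cycle arcs and the ear are pairwise internally disjoint and that the degenerate single-edge ear still yields a bona fide Monma graph; and one must apply the hereditary property in the correct direction, that is, its contrapositive, \emph{a nonseparable subgraph forces the ambient graph to be nonseparable}. Once these two points are pinned down, the conclusion is immediate.
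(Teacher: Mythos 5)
Your proposal is correct and follows essentially the same route as the paper: invoke the Whitney ear decomposition to find a first ear on the base cycle, observe that $C_s \cup H_1$ is a Monma (theta) subgraph, and conclude via Lemma~\ref{monma} together with the contrapositive of the hereditary property (Theorem~\ref{hereditary}). Your added care about the degenerate chord case and the explicit internal-disjointness check only fills in details the paper leaves implicit.
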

\begin{proof}
As the base-step, we know by Lemma~\ref{monma} that Monma graphs are nonseparable. If $G$ is biconnected and it is 
not an elementary cycle, then it has the addition of at least one ear of a cycle. Therefore, it has Monma as a subgraph. 
Therefore, Theorem~\ref{hereditary} asserts that $G$ cannot be separable. 
\end{proof}

Recall that the link-connectivity of a graph $G$ is the least number of links that must be removed in order 
to disconnect $G$. The Butterfly-graph consists of two triangles meeting in a common point (see Figure~\ref{Butterfly}).  
This is the smallest graph with link connectivity 2 that is not biconnected, since the kissing-point is a cut-point. 
As a consequence, it is natural to decide the separability of this graph:
\begin{lemma}
The Butterfly-graph $B$ is nonseparable
\end{lemma}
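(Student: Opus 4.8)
The plan is to establish nonseparability directly through Theorem~\ref{hahn}, by exhibiting a single point that lies simultaneously in $CH(\mathcal{P})$ and $CH(\mathcal{C})$. This mirrors the strategy used for Monma graphs in Lemma~\ref{monma}, and it is the natural route here because the Butterfly lies outside the reach of our earlier machinery: it is not biconnected (the kissing point is a cut-point), so Theorem~\ref{jaja1} does not apply, and since its only vertex of degree exceeding two is the kissing point, it contains no Monma subgraph, so the hereditary argument of Theorem~\ref{hereditary} cannot be invoked either. A fresh certificate of nonseparability is therefore required.

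First I would fix notation: call $c$ the common vertex, let the two triangles be $\{c,a_1,a_2\}$ and $\{c,b_1,b_2\}$, and single out the two edges incident to a corner of each triangle, namely $e_1=\{c,a_1\}$, $e_2=\{a_1,a_2\}$, $e_3=\{c,b_1\}$, $e_4=\{b_1,b_2\}$. Reusing the notation $1_{e_i,e_j}$ for the state that is $1$ in every coordinate except the two positions $e_i,e_j$, I would then verify the identity
\[
\tfrac{1}{2}\left(1_{e_1,e_2}+1_{e_3,e_4}\right)=\tfrac{1}{4}\left(1_{e_1,e_3}+1_{e_1,e_4}+1_{e_2,e_3}+1_{e_2,e_4}\right),
\]
both sides being the vector that equals $\tfrac{1}{2}$ on the four coordinates $e_1,\dots,e_4$ and $1$ elsewhere. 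The left-hand side is a convex combination of the two states $1_{e_1,e_2}$ and $1_{e_3,e_4}$, each of which isolates a single degree-two corner ($a_1$ and $b_1$, respectively) and is hence a cutset; the right-hand side is a convex combination of the four states $1_{e_i,e_j}$ with $i\in\{1,2\}$, $j\in\{3,4\}$, each of which keeps the graph connected and spanning and is hence a pathset. By Theorem~\ref{hahn} the Butterfly is nonseparable.

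The only real obstacle is the bookkeeping of the middle step: one must confirm that the two left-hand states are genuinely disconnecting while all four right-hand states remain connected. This is immediate once one observes that removing both edges at a corner isolates that corner, whereas removing one edge at the corner of triangle $\{c,a_1,a_2\}$ together with one edge at the corner of triangle $\{c,b_1,b_2\}$ always leaves each affected vertex attached to $c$ by its surviving edge. As an independent check (and an alternative proof), one could instead apply the corollary of Theorem~\ref{assign}: writing $P,Q$ for the edge-weight sums of the two triangles and $p_{\max},q_{\max}$ for their largest edges, the minimum spanning tree drops the heaviest edge of each triangle, giving $S-MST=p_{\max}+q_{\max}$, while the minimum cut is $m=\min\{P-p_{\max},\,Q-q_{\max}\}$; a short case split on whether $p_{\max}\le q_{\max}$ then yields $p_{\max}+q_{\max}\ge m$ for every assignment, so $S\ge MST+m$ always, with equality in the all-unit-weight case. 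Either route settles the claim.
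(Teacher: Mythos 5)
Your proof is correct, but it takes a genuinely different primary route from the paper's. The paper argues through the weight-assignment characterization (the corollary of Theorem~\ref{assign}): ordering the weights within each triangle, it computes $MST(B)=S-n_3-n_6$ and $m(B)=\min\{n_1+n_2,n_4+n_5\}$ and shows $m(B)\leq n_3+n_6$ for every assignment, so no feasible assignment exists. You instead exhibit an explicit point in $CH(\mathcal{P})\cap CH(\mathcal{C})$ and invoke Theorem~\ref{hahn}, transplanting the convex-combination identity from the Monma argument (Lemma~\ref{monma}) to the four edges incident to one degree-two corner of each triangle; your verification that $1_{e_1,e_2}$ and $1_{e_3,e_4}$ isolate a corner while the four cross states stay connected is accurate, and your preliminary observation that neither Theorem~\ref{jaja1} nor the hereditary argument applies (the Butterfly has only one vertex of degree exceeding two, hence no Monma subgraph) correctly explains why a fresh certificate is needed. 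The convex-hull certificate is arguably cleaner: it is a single checkable witness rather than an inequality quantified over all assignments, and it extends verbatim to two kissing cycles and, with the bridge left intact, to the Glasses-graph. The paper's route, on the other hand, stays within the $MST+m$ framework used throughout Section~\ref{all-terminal} and makes the quantitative failure ($MST(B)+m(B)\leq S$, with equality in the unit-weight case, as you note) explicit. Your secondary ``independent check'' is essentially the paper's proof, so you have in effect supplied both arguments.
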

\begin{proof}
Consider an arbitrary assignment $\{n_i\}_{i=1,\ldots,6}$ for the links. 
We consider an assignment $n_1\leq n_2 \leq n_3$ in the left triangle, and $n_4\leq n_5 \leq n_6$ in the right triangle. Therefore $MST(B)=S-n_3-n_6$, 
and $m(B)=\min\{n_1+n_2,n_4+n_5\} \leq \min\{2n_2,2n_5\}\leq n_3+n_6$. 
This implies that $MST(B)+m(B) \leq S$ for all possible assignments in $B$, and $B$ has no feasible assignment.   
\end{proof}

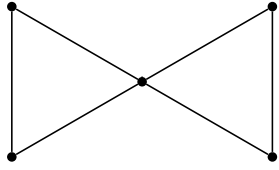
\begin{figure}
\centering
\begin{tikzpicture}[scale=4.0,>=stealth,shorten >=0.1pt, auto, semithick,
                     nodot/.style={circle,draw=black,fill=black,minimum size=3pt, inner sep=0pt, fill=gray!50},                     
                     nodop/.style={circle,draw=black,fill=black,minimum size=3pt, inner sep=0pt},
                     every to/.style={draw,thin,black}]

\def \msqt {0.8660254037844386}
\def \triangleScale {1}
\def \angleAdjust {-30}

\centering

\begin{scope}[ xshift = -2cm, yshift = 0cm, scale = 0.5]
\node at (0, -\triangleScale * 1) {};

\node [nodop] (1) at (0,0) {};
\node [nodop] (2) at (-\triangleScale * \msqt, 0.5 * \triangleScale) {};
\node [nodop] (3) at (-\triangleScale * \msqt, -0.5 * \triangleScale) {};
\node [nodop] (4) at (\triangleScale * \msqt, 0.5 * \triangleScale) {};
\node [nodop] (5) at (\triangleScale * \msqt, -0.5 * \triangleScale) {};

\path (1) edge (2) edge (3) edge (1) edge (4) edge (5) edge (1);

\path (2) edge (3);

\path  (4) edge (5);
\end{scope}

\end{tikzpicture}
\caption{Butterfly-graph $B$\label{Butterfly}.}
\end{figure}

An analogous reasoning leads to the following generalization:
\begin{corollary}\label{kiss}
Two kissing cycles are nonseparable.
\end{corollary}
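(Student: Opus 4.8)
The plan is to imitate the weight-counting argument used for the Butterfly-graph, now applied to two arbitrary cycles $C_p$ and $C_q$ meeting at a single cut-point $v$, and to show that no assignment can satisfy the graph separability criterion $S < MST(G)+m(G)$ stated in the Corollary to Theorem~\ref{assign}. First I would fix an arbitrary non-negative assignment to the $p+q$ links and sort the weights within each cycle as $a_1 \le a_2 \le \cdots \le a_p$ in $C_p$ and $b_1 \le b_2 \le \cdots \le b_q$ in $C_q$, writing $S = \sum_i a_i + \sum_j b_j$. Since the graph is simple, each cycle has length at least three, so $p,q \ge 3$.

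Next I would determine $MST(G)$ and $m(G)$ exactly. The graph has $p+q$ links and $p+q-1$ nodes, so every spanning tree omits exactly two links; because each of the two independent cycles must be broken, a spanning tree omits precisely one link from each cycle (omitting two from one cycle disconnects it, and omitting none leaves a cycle), and the cheapest such tree discards the heaviest link of each, giving $MST(G) = S - a_p - b_q$. For the mincut I would argue that any two distinct links of a single cycle form a disconnecting set, since their removal detaches an arc (or a path, when both are incident to $v$) from the rest of the graph, whereas removing a single link from each cycle leaves both cycles as paths still joined at $v$ and hence keeps $G$ connected. Therefore the cheapest cut takes the two lightest links of one cycle, so $m(G) = \min\{a_1+a_2,\; b_1+b_2\}$.

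With these values the separability test $S < MST(G)+m(G)$ becomes $\min\{a_1+a_2,\; b_1+b_2\} > a_p + b_q$, and it suffices to contradict this for every assignment. Assuming without loss of generality that $a_2 \le b_2$ (otherwise swap the roles of the two cycles), I would bound $\min\{a_1+a_2,\; b_1+b_2\} \le a_1 + a_2 \le 2a_2 = a_2 + a_2 \le a_p + b_q$, where the last step uses $a_2 \le a_p$ from the ordering in $C_p$ together with $a_2 \le b_2 \le b_q$ from the assumption and the ordering in $C_q$. Hence $MST(G)+m(G) \le S$ for every feasible assignment, so none exists, and the Corollary to Theorem~\ref{assign} yields that two kissing cycles are nonseparable. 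Specializing to $p=q=3$ recovers the Butterfly-graph.

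The only delicate point is the exact determination of $m(G)$: one must verify that deleting two links of a single cycle genuinely disconnects the whole kissing-cycles graph in every configuration, including the boundary case in which both deleted links are incident to the cut-point $v$, and that no two links drawn one from each cycle ever suffice to disconnect $G$. Once $m(G) = \min\{a_1+a_2, b_1+b_2\}$ and $MST(G) = S - a_p - b_q$ are established, the final inequality is routine.
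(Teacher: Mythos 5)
Your proposal is correct and follows essentially the same route as the paper: the paper derives this corollary by "an analogous reasoning" to its Butterfly-graph lemma, and your argument is exactly that reasoning written out in general, with $MST(G)=S-a_p-b_q$, the bound $m(G)\leq\min\{a_1+a_2,\,b_1+b_2\}\leq a_p+b_q$, and the resulting violation of the criterion $S<MST(G)+m(G)$.
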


A further generalization recalls Whitney characterization for bridgeless graphs: $G$ is a bridgeless graph if and only if $G=C_{s} \cup H_{1} \cup H_{2} \cup \dots \cup H_{r}$, 
$C_s$ is an elementary cycle and $H_{i}$ is the addition of an ear or a kissing cycle to the previous graph. The following result is analogous to Theorem~\ref{jaja1}:
\begin{corollary}\label{jaja2}
Bridgeless graphs are nonseparable, except for elementary cycles.
\end{corollary}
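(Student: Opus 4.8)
The plan is to mirror the proof of Theorem~\ref{jaja1}, combining the Whitney ear-decomposition for bridgeless graphs (stated just above) with the hereditary property of separability (Theorem~\ref{hereditary}). Since elementary cycles are already known to be separable, it suffices to prove that every bridgeless graph $G$ that is \emph{not} an elementary cycle is nonseparable. For such a $G$ the decomposition $G = C_s \cup H_1 \cup \dots \cup H_r$ has $r \geq 1$, so at least one structure $H_1$ is adjoined to the initial cycle $C_s$.

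First I would examine this first addition $H_1$ and split into the two cases allowed by the characterization. If $H_1$ is an ear, i.e.\ an external elementary path joining two distinct nodes $u,v$ of $C_s$, then in $C_s \cup H_1$ the endpoints $u$ and $v$ have degree $3$ and are joined by three internally node-disjoint paths: the two arcs into which $u$ and $v$ split the cycle, and the ear itself. This is exactly a Monma graph, which is nonseparable by Lemma~\ref{monma}. If instead $H_1$ is a kissing cycle, then $C_s \cup H_1$ is a pair of cycles meeting at a single vertex, that is, two kissing cycles, nonseparable by Corollary~\ref{kiss}.

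In either case $C_s \cup H_1$ is a nonseparable subgraph of $G$. Reading the contrapositive of Theorem~\ref{hereditary} --- a graph containing a nonseparable subgraph cannot be separable --- we conclude that $G$ is nonseparable. Together with the separability of elementary cycles this establishes the stated dichotomy.

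The main obstacle I anticipate is the clean identification in the ear case: one must check that adjoining an ear to an elementary cycle always yields a genuine (simple) Monma graph $M_{l_1+1,l_2+1,l_3+1}$ with each of the three paths of length at least one, so that Lemma~\ref{monma} applies directly. The kissing-cycle case is immediate from Corollary~\ref{kiss}. A secondary point is noting that only the first addition $H_1$ is needed --- the later $H_i$ never enter the argument --- which keeps the proof a faithful analogue of Theorem~\ref{jaja1}.
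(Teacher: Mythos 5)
Your argument is exactly the one the paper intends: the corollary is stated as ``analogous to Theorem~\ref{jaja1}'', and you supply precisely that analogue, using the Whitney decomposition for bridgeless graphs so that the first addition $H_1$ yields either a Monma subgraph (ear case, Lemma~\ref{monma}) or two kissing cycles (Corollary~\ref{kiss}), and then invoking the hereditary property of Theorem~\ref{hereditary}. This is correct and matches the paper's approach.
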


In order to fully characterize separable graphs, we need to study graphs that have at least one bridge $e \in G$. 
We already know that all the links in a tree are bridges, and they are separable graphs. Furthermore, 
cycles with arborescences are separable as well. Let us proceed our analysis with two triangles linked by a single bridge $e$, 
a graph called the Glasses-graph $B_e$.
\begin{lemma}
The Glasses-graph $B_e$ is nonseparable.
\end{lemma}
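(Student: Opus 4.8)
The plan is to invoke the feasible-assignment criterion for graphs (the corollary of Theorem~\ref{assign}): it suffices to show that \emph{every} non-negative weighting of the seven links of $B_e$ satisfies $S \ge MST(B_e) + m(B_e)$, so that the strict inequality required for separability can never be met. I would fix notation that respects the geometry of the graph, writing $a_1 \le a_2 \le a_3$ for the weights of the left triangle, $c_1 \le c_2 \le c_3$ for the weights of the right triangle, and $b$ for the weight of the bridge, so that $S = a_1+a_2+a_3+b+c_1+c_2+c_3$. This is the same ordering device used for the Butterfly-graph, the only new ingredient being the presence of the bridge.

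Next I would compute the two quantities explicitly. Since the bridge is a cut-edge it belongs to every spanning tree, so a spanning tree of $B_e$ is the bridge together with a spanning tree of each triangle; a minimum one discards the heaviest edge of each triangle, giving $MST(B_e) = (a_1+a_2) + b + (c_1+c_2) = S - a_3 - c_3$. For the mincut I would enumerate the candidate cuts: the bridge alone, of cost $b$, and, inside each triangle, the three ways of removing a pair of its edges, which disconnect a subset of that triangle's vertices from the rest of $B_e$; the costs of these are the three pairwise sums of the triangle's weights, the least being the sum of its two lightest edges. This yields
\[
m(B_e) = \min\{\, b,\ a_1+a_2,\ c_1+c_2 \,\}.
\]
The main obstacle is precisely this step: I must confirm that no cheaper cut is hidden in the combined structure. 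The clean way is to split on whether a cut separates the two triangle-sides — in which case it must contain the unique crossing edge, costing at least $b$ — or keeps them together, in which case it decomposes into independent triangle cuts, each costing at least the two-lightest-edges sum; since $B_e$ is $2$-edge-connected within each triangle, every such cut has at least two edges, and all three bounds are attained.

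With both quantities in hand, the statement reduces to the single inequality $m(B_e) \le a_3 + c_3$, since then $MST(B_e) + m(B_e) \le (S - a_3 - c_3) + (a_3 + c_3) = S$. Because $m(B_e) \le \min\{a_1+a_2,\ c_1+c_2\}$, the bridge weight $b$ drops out of the estimate entirely, and I would finish with the same symmetric argument as for the Butterfly graph: assuming without loss of generality that $a_2 \le c_2$, one has $\min\{a_1+a_2, c_1+c_2\} \le a_1 + a_2 \le 2a_2 \le a_3 + c_3$, using $a_2 \le a_3$ together with $a_2 \le c_2 \le c_3$. Hence $S \ge MST(B_e) + m(B_e)$ for every assignment, no feasible assignment exists, and $B_e$ is nonseparable. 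The takeaway is that, once the mincut formula is pinned down, the Glasses-graph behaves exactly like the Butterfly-graph with one harmless extra edge.
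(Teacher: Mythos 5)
Your proposal is correct and follows essentially the same route as the paper: order the weights within each triangle, observe $MST(B_e)=S-a_3-c_3$ and $m(B_e)=\min\{b,\,a_1+a_2,\,c_1+c_2\}\le a_3+c_3$ (the same two-lightest-edges estimate used for the Butterfly-graph, with the bridge weight dropping out), and conclude that no feasible assignment exists. The only difference is cosmetic: you justify the mincut formula by an explicit case split on whether the cut separates the two triangles, whereas the paper takes that formula as read.
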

\begin{proof}
The reasoning is identical to the Butterfly-graph. Consider an assignment $\{n_i\}_{i=1,\ldots,7}$ as in the Butterfly-graph, 
but $n_7$ is the assignment for the bridge $e$. Therefore:
\begin{align*}
MST(B_e)+m(B_e)&=(S-n_3-n_6)\\ &+\min\{n_1+n_2,n_4+n_5,n_7\}\\ &\leq S,
\end{align*}
since $\min\{n_1+n_2,n_4+n_5,n_7\} \leq n_3 +n_6$, 
and the last inequality was already proved for the Butterfly-graph. 
\end{proof}

\begin{figure}
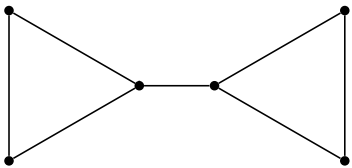

\centering
\Glasses{}
\caption{Glasses-graph $B_e$.}
\end{figure}

A slight generalization is possible:
\begin{corollary}\label{dosciclos}
Two cycles linked by an elementary path are nonseparable.
\end{corollary}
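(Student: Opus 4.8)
The plan is to imitate the direct computation used for the Butterfly- and Glasses-graph lemmas: for an \emph{arbitrary} assignment $\{n_i\}$ of nonnegative weights to the links, I will show $MST(G)+m(G)\le S$, which by the graph corollary of Theorem~\ref{assign} means no feasible assignment exists and hence $G$ is nonseparable. Here $G$ denotes two cycles $C_1,C_2$ joined by an elementary path $P$ whose endpoints are attachment nodes $u_1\in C_1$ and $u_2\in C_2$.

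First I would pin down $MST(G)$. The graph has cycle rank $2$ (the two cycles are the only independent cycles, and every link of $P$ is a bridge), so a spanning tree must keep all path links and discard exactly one link from each cycle. Kruskal therefore discards the heaviest link of each cycle, giving $MST(G)=S-M_1-M_2$, where $M_1$ and $M_2$ are the maximum link weights of $C_1$ and $C_2$ respectively.

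Next I would bound the mincut $m(G)$. Let $a_1\le a_2\le\cdots$ be the sorted link weights of $C_1$ and $b_1\le b_2\le\cdots$ those of $C_2$. Removing the two lightest links of a cycle isolates the part not containing its attachment node, producing a legitimate cut of the whole graph, so $m(G)\le\min\{a_1+a_2,\,b_1+b_2\}$. The central step is the cross-cycle estimate, exactly parallel to the Butterfly chain $\min\{n_1+n_2,n_4+n_5\}\le\min\{2n_2,2n_5\}\le n_3+n_6$: I would write $\min\{a_1+a_2,\,b_1+b_2\}\le 2\min\{a_2,b_2\}$, assume without loss of generality that $a_2\le b_2$, and then combine $a_2\le M_1$ (its own cycle's maximum, valid since a cycle has at least three links) with $a_2\le b_2\le M_2$ to conclude $2a_2\le M_1+M_2$. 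Hence $m(G)\le M_1+M_2$ and therefore $MST(G)+m(G)=S-M_1-M_2+m(G)\le S$, which is the desired conclusion.

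I expect the only delicate point to be the mincut inequality rather than the final arithmetic. One must check that the cheapest disconnecting operation inside a cycle is the removal of its two lightest links (any two links split the cycle into two arcs, one of which is separated from the attachment node, so this is always a valid cut of capacity $a_1+a_2$), and that the path $P$ contributes only single-bridge cuts, so that allowing $P$ to have interior vertices and several links does not create a cheaper cut and the bound $m(G)\le\min\{a_1+a_2,\,b_1+b_2\}$ survives. I would also remark that the hereditary route via Theorem~\ref{hereditary} does not shortcut the argument: $G$ has bridges and a path with interior vertices cannot be reduced to the Glasses-graph by subgraph deletions alone, which is exactly why a direct computation analogous to the Glasses-graph lemma is required.
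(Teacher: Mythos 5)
Your proposal is correct and is essentially the paper's intended argument: the paper states this corollary without proof as ``a slight generalization'' of the Glasses-graph lemma, and your computation ($MST(G)=S-M_1-M_2$, $m(G)\le\min\{a_1+a_2,b_1+b_2\}\le 2\min\{a_2,b_2\}\le M_1+M_2$) is precisely that generalization, carried out for arbitrary cycle lengths and path lengths. The details you flag as delicate (any two cycle links form a valid cut because each cycle meets the rest of the graph at a single attachment vertex, and the hereditary theorem cannot substitute for the computation since the Glasses-graph need not appear as a subgraph) are handled correctly.
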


We are now ready to fully characterize separable graphs:
\begin{theorem}\label{characterize}
A graph $G$ is separable iff $G$ falls into one of the four categories:
\begin{enumerate}
\item $G$ is not connected;
\item $G$ is a tree;
\item $G$ is an elementary cycle;
\item $G$ is an elementary cycle with arborescences. 
\end{enumerate}
\end{theorem}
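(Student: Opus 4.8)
The plan is to prove both implications, treating the four categories as the exhaustive list of separable graphs. The forward implication (each listed category is separable) is a matter of assembling results already in hand, while the converse (a separable graph must be one of the four types) is the substantive part and is best argued by contraposition together with the hereditary property of Theorem~\ref{hereditary}.

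For the \emph{if} direction I would dispatch each category in turn. A disconnected $G$ is separable vacuously: no spanning subgraph is connected, so choosing any non-negative weights and any threshold $\alpha > S = \sum_i n_i$ makes $c(E') \geq \alpha$ fail for every $E'$, which matches the (always false) connectivity condition. Trees and elementary cycles were already exhibited as separable, and elementary cycles with arborescences are separable by Corollary~\ref{arborescencia}. Hence every graph in categories (1)--(4) is separable.

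For the \emph{only if} direction I would prove the contrapositive: if $G$ lies in none of the four categories then $G$ is nonseparable. Being outside category (1) means $G$ is connected; being outside (2)--(4) means $G$ is connected but neither a tree nor unicyclic, so its cyclomatic number $|E| - |V| + 1$ is at least $2$ and $G$ carries two independent cycles. The strategy is to locate inside such a $G$ one of the minimal nonseparable configurations already established and then invoke heredity, since any graph containing a nonseparable subgraph is itself nonseparable. I would organize the search through the block (biconnected-component) structure of $G$: if some block is biconnected and is not an elementary cycle, that block is nonseparable by Theorem~\ref{jaja1} and we are done; otherwise every block is either a bridge or an elementary cycle, and because the cyclomatic number is at least $2$ there must be at least two elementary-cycle blocks. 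Two such cycle-blocks either share a cut-vertex, forming two kissing cycles (nonseparable by Corollary~\ref{kiss}), or are joined along the block--cut tree by a simple path, forming two cycles linked by an elementary path (nonseparable by Corollary~\ref{dosciclos}). In every case $G$ contains a nonseparable subgraph, so $G$ is nonseparable. Equivalently, one can split on a bridge: the bridgeless case is handled directly by Corollary~\ref{jaja2}, and a bridge whose two sides each contain a cycle exhibits the two-cycles-linked-by-a-path subgraph, while a side carrying two independent cycles is treated by induction on the number of edges.

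The main obstacle I anticipate is precisely this structural case analysis --- verifying that \emph{every} connected graph with at least two independent cycles must contain one of the three forbidden configurations (a non-cycle biconnected block, two kissing cycles, or two cycles joined by a path), with no overlooked way for two cycles to interact. Making this airtight is cleanest through the block--cut tree, which guarantees that any two blocks meet in at most one vertex; this forces the kissing-versus-linked dichotomy and rules out more exotic sharings of two cycles. The remaining ingredients (reducing containment to nonseparability via heredity, and the vacuous separability of disconnected graphs) are routine once this decomposition is in place.
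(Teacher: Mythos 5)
Your proposal is correct and follows essentially the same route as the paper: the forward direction assembles the already-established separability of each category (with the disconnected case handled vacuously), and the converse locates one of the forbidden nonseparable configurations --- a non-cycle biconnected piece (via Theorem~\ref{jaja1}/Corollary~\ref{jaja2}), two kissing cycles (Corollary~\ref{kiss}), or two cycles joined by a path (Corollary~\ref{dosciclos}) --- and invokes the heredity of Theorem~\ref{hereditary}. Your explicit use of the block--cut tree merely formalizes the paper's terser ``either they are kissing or connected by a path'' dichotomy.
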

\begin{proof}
The proof of the reverse direction is easy, as all graphs in the four categories are separable:
\begin{enumerate}
\item If $G$ is disconnected, all of its configurations are cutsets and the reliability is null. In this case, 
the inequality $\sum_{i=1}^{N}\sigma_i > 2N$ is not satisfied by any binary vector $\sigma=(\sigma_1,\ldots,\sigma_N)$, and the 
graph is separable. 
\item If $G$ is a tree $T_N$ with $N$ links, the evidence is the hyperplane $\sum_{i=1}^{N}\sigma_i \geq N$. 
\item If $G=C_N$ is an elementary cycle, the evidence is the inequality $\sum_{i=1}^{N}\sigma_i \geq N-1$.
\item If $G$ is a tree with arborescences, Lemma~\ref{arborescencia} states that $G$ is separable. 
\end{enumerate}
To prove the direct direction, let $G$ be a separable graph, and assume $G$ is connected. 
We know by Corollary~\ref{jaja2} that $G$ must have a bridge. 
Combining Theorem~\ref{hereditary}~and~Corollary~\ref{jaja1}, we know that every  subgraph of $G$ must be an elementary cycle. Combining Corollaries~\ref{kiss}~and~\ref{dosciclos}, $G$ cannot have two cycles (either they are kissing or connected by a path). Therefore, $G$ is either a tree, an elementary cycle or an elementary cycle with arborescences. 
\end{proof}

\begin{corollary}\label{base}
The all-terminal reliability evaluation of separable graphs belongs to the class $\mathcal{P}$ of polynomial-time problems.
\end{corollary}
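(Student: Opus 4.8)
The plan is to lean entirely on the structural dichotomy established in Theorem~\ref{characterize}: a separable graph is one of only four types, and for each of them the all-terminal reliability admits an explicit closed form that a single traversal of the graph can evaluate. Concretely, I would first classify $G$ into its category, and then substitute into the corresponding formula. Throughout I assume the standard all-terminal model in which links fail independently, writing $p_i$ for the (up) reliability of link $i$.

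For the classification step I would run one depth-first or breadth-first traversal, which in time $O(|V|+|E|)$ reports whether $G$ is connected and, if so, its number of edges. A disconnected graph is category~(1). A connected graph with $|E|=|V|-1$ is a tree, category~(2). A connected graph with $|E|=|V|$ is unicyclic, and by Theorem~\ref{characterize} it is then either an elementary cycle (category~(3)) or an elementary cycle with arborescences (category~(4)); in either case the unique cycle is exposed in linear time by iteratively pruning degree-one vertices, the surviving $2$-regular core being the cycle and the pruned links being the bridges.

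The reliability formulas then follow from the fact that in each category the structure function is governed only by bridges, which must all be operational, and by cycles, each of which tolerates at most one failed link:
\begin{itemize}
\item If $G$ is disconnected, no spanning subgraph is connected, so $R_G=0$.
\item If $G$ is a tree, connectivity requires every link, hence $R_G=\prod_{i}p_i$.
\item If $G=C_N$ is an elementary cycle, it stays connected precisely when at most one link fails, so $R_G=\prod_{i}p_i+\sum_{i}(1-p_i)\prod_{j\neq i}p_j$.
\item If $G$ is an elementary cycle with arborescences, every arborescence link is a bridge and must be up, while the cycle contributes its own factor; thus $R_G$ is the product of $\prod_{b}p_b$ over the bridge links times the cycle expression above over the cycle links.
\end{itemize}
Each expression is assembled in a single linear-time sweep; in particular the cycle sum needs only prefix and suffix products of the $p_j$, avoiding any division.

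The real content of this corollary was already spent in proving Theorem~\ref{characterize}, so what remains is essentially bookkeeping, and the one point I expect to deserve a line of justification is the factorization used in category~(4). Because the bridge links lie outside every cycle, the event ``all bridges are up'' and the event ``the cycle loses at most one link'' depend on disjoint sets of components, so under independent links their probabilities multiply; with that observed, the four formulas are immediate and evaluable in $O(|V|+|E|)$ time, which places the all-terminal reliability evaluation of separable graphs in $\mathcal{P}$.
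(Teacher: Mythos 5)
Your proposal is correct and follows essentially the same route as the paper: invoke Theorem~\ref{characterize} to reduce to the four structural categories and evaluate the same closed-form expressions for each. Your additional remarks on the linear-time classification and the prefix/suffix-product evaluation of the cycle sum are a mild sharpening of the paper's ``linear or quadratic'' operation count, but not a different argument.
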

\begin{proof}
The analysis is straightforward. Let $G$ be a separable graph:
\begin{enumerate}
\item If $G$ is not connected, then $R(G)=0$.
\item If $G=T_N$ is a tree with $N$ links with independent reliabilities $(p_e)_{e\in T_N}$, then 
$R(G)=\prod_{e\in T_N}p_e$.
\item If $G=C_N$, then $$R(C_N)=\prod_{e\in C_N}p_e + \sum_{e\in C_N}(1-p_e)\prod_{e^{\prime}\neq e}p_{e^{\prime}}.$$
\item Finally, if $G$ is an elementary cycle with arborescences: $G=C_l \cup T_s$, being $T_s$ union of trees pending 
from the cycle $C_l$. Therefore, $R(G)=R(C_l)\times \prod_{e\in T_s}p_e$.
\end{enumerate}
The reader can appreciate that the reliability computation is a product, or a sum of products of the elementary link reliabilities. Therefore, the number of operations involved are linear, or quadratic, in the number of 
links.
\end{proof}

The corank of a graph is the number of independent cycles. In a connected graph with $n$ nodes and $m$ links, 
its corank is precisely $c(G)=m-n+1$. It is worth to remark that Theorem~\ref{characterize} can be re-stated in terms of corank: 
a connected graph $G$ is separable if and only if its corank is either $0$ or $1$.

We close this section by discussing a connection between the combinatorial optimization problem called the Network Utility Problem (NUP) and 
separable graphs. First, observe that an arbitrary spanning tree of a connected graph $G$ has $n-1$ links. Therefore, 
the corank of a graph is precisely the number of \emph{additional links that we must pay} to build the graph $G$, 
starting from a minimally-connected graph. In terms of communication, the corank of $G$ represents \emph{redundancy}. 
At the cost of redundancy, the resulting network can be robust under a certain amount of link failures. The profit is 
the link connectivity $\lambda(G)$, which represents the lowest number of links that should be removed in order to 
disconnect $G$. As a consequence, the \emph{utility} of a graph, $u(G)$, is the difference between the connectivity and 
the corank: $u(G)=\lambda(G)-c(G)=\lambda - m+n-1$. In~\cite{Canale2016}, the authors formally proved the following
\begin{theorem}
The graphs with maximum utility are exactly the trees and cycles. Their utility value is $1$.
\end{theorem}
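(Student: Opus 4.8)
The plan is to work with connected graphs, for which the corank is $c(G) = m - n + 1$ exactly as recalled in the text; the utility notion (connectivity minus redundancy) is only meaningful there, so I would restrict attention to the connected case. Rewriting the utility as $u(G) = \lambda(G) - c(G) = \lambda(G) - (m - n + 1)$, the assertion ``$u(G) \le 1$'' becomes equivalent to the single inequality $\lambda(G) \le m - n + 2 = c(G) + 1$, and the maximizers are precisely the graphs attaining equality. So the entire theorem reduces to proving this inequality together with a tightness analysis.

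First I would establish the inequality through the standard chain $\lambda(G) \le \delta(G) \le \frac{2m}{n}$, where $\delta(G)$ is the minimum degree: the first bound is the classical fact that isolating a minimum-degree vertex produces an edge cut, and the second is that the minimum degree never exceeds the average degree $\frac{2m}{n}$. It then suffices to check $\frac{2m}{n} \le m - n + 2$. A short computation shows that, for $n \ge 3$, this last inequality is equivalent to $m \ge n$, hence it holds for every connected graph that is not a tree. The tree case $m = n - 1$ I would treat directly, where $\lambda(G) = 1$ and $m - n + 2 = 1$, so equality holds and $u(G) = 1$. This simultaneously gives $u(G) \le 1$ for all connected graphs and shows that every tree attains the bound.

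The delicate part is the equality characterization. Suppose $u(G) = 1$ with $m \ge n$. Then $\lambda(G) = m - n + 2$ forces equality throughout the chain, i.e.\ $\lambda(G) = \delta(G) = \frac{2m}{n} = m - n + 2$. From $\frac{2m}{n} = m - n + 2$ and $n \ge 3$ I would deduce $m = n$, and substituting back gives $\delta(G) = \frac{2m}{n} = 2$. Moreover, $\delta(G) = \frac{2m}{n}$ means the minimum degree equals the average degree, which forces every vertex to have the same degree, so $G$ is $2$-regular; a connected $2$-regular graph is precisely an elementary cycle $C_n$, and indeed $\lambda(C_n) = 2 = m - n + 2$. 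Thus, among connected non-tree graphs, equality holds only for cycles.

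The main obstacle I anticipate is exactly this tightness step: one must argue carefully that equality in $\delta(G) = \frac{2m}{n}$ forces regularity (if the minimum degree equals the average, no vertex can exceed the average, so all degrees coincide), and then that a connected $2$-regular graph must be a single cycle and nothing else. Once this is in place, collecting the cases shows that the maximum utility equals $1$ and is achieved exactly by the trees (all of which satisfy $u = 1$) and the elementary cycles, completing the characterization.
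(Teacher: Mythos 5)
Your argument is correct, but there is nothing in the paper to compare it against: the paper does not prove this theorem, it imports it from \cite{Canale2016}. Taken on its own, your self-contained proof works. Rewriting $u(G)\le 1$ as $\lambda(G)\le m-n+2$ and pushing it through the standard chain $\lambda(G)\le\delta(G)\le 2m/n$ is sound: the inequality $2m/n\le m-n+2$ is equivalent to $(n-2)(m-n)\ge 0$, which holds for every connected non-tree on $n\ge 3$ vertices, and the tree case ($\lambda=1$, corank $0$) is verified directly. The tightness analysis is also right: equality forces $(n-2)(m-n)=0$, hence $m=n$, hence $\delta(G)=2m/n=2$; a minimum degree equal to the average degree forces all degrees to coincide, so $G$ is $2$-regular, and a connected $2$-regular graph is a single elementary cycle, which indeed has $\lambda=2=m-n+2$. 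Two small points are worth making explicit rather than waving at. First, disconnected graphs are not merely ``out of scope'': they have $\lambda(G)=0$ and corank $m-n+k\ge 0$ ($k$ the number of components), hence $u(G)\le 0<1$, so restricting to connected graphs genuinely loses nothing. Second, the degenerate orders $n\le 2$ should be folded into the tree case (the only connected simple graphs there are trees), which your case split implicitly does but should state. With those two remarks added, the proof is complete.
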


\begin{corollary}
All the graphs with maximum utility are separable graphs.
\end{corollary}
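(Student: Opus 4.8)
The plan is to read this statement as an immediate consequence of two facts already in hand: the preceding theorem of Canale et al.~\cite{Canale2016}, which identifies the maximum-utility graphs precisely as the trees and the elementary cycles, and Theorem~\ref{characterize}, which lists all separable graphs. First I would invoke that theorem to reduce the claim to showing that every tree and every elementary cycle is separable. Then I would simply observe that trees constitute category~2 and elementary cycles constitute category~3 in the classification of Theorem~\ref{characterize}. Since both families appear explicitly among the separable graphs, every maximum-utility graph is separable and the corollary follows.

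The one point that deserves a sentence of care is the identification of the word \emph{cycle} in the utility theorem with the \emph{elementary cycle} $C_n$ of Theorem~\ref{characterize}. A maximum-utility graph attains $u(G)=\lambda(G)-c(G)=1$; for a tree $T_n$ one has corank $c=0$ and link-connectivity $\lambda=1$ (every link is a bridge), giving $u=1$, while for the elementary cycle $C_n$ one has $c=1$ and $\lambda=2$, again giving $u=1$. A graph carrying a second independent cycle or a chord would raise its corank and hence lower its utility below $1$, so the cyclic maximizers are exactly the elementary cycles $C_n$. This is precisely the category-3 object in our list, which is why the match-up is clean rather than merely formal.

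In fact the cleanest route bypasses the case analysis entirely by using the corank restatement that follows Theorem~\ref{characterize}: a connected graph is separable if and only if its corank is $0$ or $1$. The maximum-utility connected graphs are exactly the trees, of corank $0$, and the elementary cycles, of corank $1$; in both cases the corank is at most $1$, so separability is forced. There is essentially no obstacle here, as the inclusion is immediate once the two characterizations are placed side by side. The only thing to double-check is that no maximum-utility graph escapes the separable categories, but since the cited theorem leaves only trees and cycles and both are explicitly separable, the verification is complete.
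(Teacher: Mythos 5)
Your proposal is correct and matches the paper's (implicit) argument: the corollary is an immediate consequence of the Canale et al.\ theorem identifying maximum-utility graphs as exactly the trees and elementary cycles, combined with Theorem~\ref{characterize} (or its corank restatement), both of which place these families among the separable graphs. The paper treats this as self-evident and offers no written proof, so your careful check that ``cycles'' means elementary cycles is a harmless bonus rather than a divergence.
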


The all-terminal reliability polynomial under identical elementary reliabilities in the links $r$ is 
\begin{equation}
R_G(r) = \sum_{i=\lambda(G)}^{c(G)-1}n_i(G)p^{m-i}(1-p)^{i} + \tau(G)p^{n-1}(1-p)^{m-n+1},
\end{equation}
where $n_i(G)$ is the number of connected subgraphs of $G$ with precisely $m-i$ links, and $\tau(G)$ is the tree-number of $G$, 
which can be found using Kirchhoff's Matrix-Tree  theorem~\cite{biggs1993algebraic}. Therefore, the number of unknowns is 
precisely the number of terms involved in the summation: $c(G)-\lambda$. The only cases where there are no terms in the sum 
occur either when $c(G)-\lambda=-1$, exactly in trees and cycles, or when $c(G)-\lambda=0$, only in an elementary cycle with arborescence, $K_4$, the Kite-graph and the Butterfly-graph~\cite{Canale2016}. These graphs are considered as the simplest in terms of reliability analysis. 
Indeed, in~\cite{Canale2016} the authors define the \emph{level of difficulty} of a graph as the difference 
$d(G)=c(G)-\lambda-1$, and a graph is easy if and only if $d(G)\leq 0$:
\begin{corollary}
All separable graphs are easy graphs.
\end{corollary}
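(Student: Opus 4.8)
The plan is to avoid recomputing the reliability polynomial case by case, and instead combine the structural classification of separable graphs (Theorem~\ref{characterize}, or equivalently its corank reformulation) with two elementary bounds: one on the corank and one on the link-connectivity. Recall that a graph $G$ is \emph{easy} precisely when $d(G)=c(G)-\lambda-1\leq 0$, i.e.\ when $c(G)-\lambda\leq 1$. So it suffices to control $c(G)$ and $\lambda(G)$ for an arbitrary separable graph.

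First I would dispose of the trivial category. If $G$ is disconnected, then every spanning subgraph is a cutset, the reliability is identically $0$, and the summation defining $R_G(r)$ carries no unknown coefficients; such a graph is easy vacuously. Hence I may assume $G$ is connected, and invoke the remark following Theorem~\ref{characterize}, namely that a connected graph is separable if and only if $c(G)\in\{0,1\}$. On the other hand, any connected graph on at least two nodes satisfies $\lambda(G)\geq 1$, since at least one link must always be deleted to destroy connectivity. Combining these two facts gives
\[
d(G)=c(G)-\lambda(G)-1\leq 1-1-1=-1<0,
\]
so $G$ is easy. This is the entire argument, and I expect it to be short.

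The only point requiring care — and the step I expect to be the main (minor) obstacle — is the bookkeeping of degenerate boundary cases, where the conventions for $c(G)$ and $\lambda(G)$ must be pinned down: the single-vertex graph and the single-edge tree, and the disconnected case, where $c(G)=m-n+1$ does not literally apply. I would neutralize this by noting that the conclusion can also be checked directly against the four categories of Theorem~\ref{characterize}: trees give $(c,\lambda)=(0,1)$ and $d=-2$; elementary cycles give $(c,\lambda)=(1,2)$ and $d=-2$; elementary cycles with arborescences give $(c,\lambda)=(1,1)$ and $d=-1$ (a hanging tree creates a bridge, so $\lambda=1$); and disconnected graphs are easy as above. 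In every category $d(G)\leq 0$, which establishes that all separable graphs are easy and also serves as a consistency check on the one-line corank computation.
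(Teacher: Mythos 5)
Your argument is correct and follows essentially the same route the paper intends: the corollary is stated without explicit proof, but it is meant to follow directly from the corank restatement of Theorem~\ref{characterize} (connected separable graphs have $c(G)\in\{0,1\}$) together with $\lambda(G)\geq 1$, giving $d(G)=c(G)-\lambda-1\leq -1$. Your case-by-case consistency check and the explicit handling of the disconnected case are a slightly more careful version of the same computation, so there is nothing to correct.
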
  
The reader can observe that the graphs with maximum utility $u(G)$ are the easiest graphs, with the minimum level of difficulty $d(G)$.

\section{$d$-Separability}
\label{dseparability}
A natural extension of our prior analysis is a classification of nonseparable systems. 

Let $\mathcal{S}= (S, r,\phi)$ be an arbitrary SMBS, and consider its corresponding $0$-$1$ 
labels of the vertices of a hypercube $Q_N$ in the Euclidean space $\mathbb{R}^N$. 

\begin{definition}[Level of Separability]\label{level}
The \emph{level of separability of $\mathcal{S}$} is the least positive integer $d$ such that 
occurs one of the following conditions:
\begin{itemize}
\item there exist $d$ hyperplanes such that all the pathsets reside in the intersection of the $d-$half spaces specified by the
non-negative normal components of the hyperplanes; or
\item there exist $d$ hyperplanes such that all the cutsets reside
in the intersection of the $d-$half spaces specified by the
opposite of the non-negative normal components.
\end{itemize}
\end{definition}

\begin{proposition}\label{vale}
Let $\mathcal{S}$ be an arbitrary SMBS, and let $\mu=|\mathcal{MC}|$ be the number of all its mincuts of $\mathcal S$. Then the level of separability d is at most $\mu$. 
\end{proposition}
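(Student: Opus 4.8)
The plan is to realize the \emph{first} (pathset) alternative of Definition~\ref{level} using exactly one hyperplane per mincut; since $d$ is defined as the least integer for which either alternative holds, exhibiting such a family of $\mu$ hyperplanes immediately gives $d\le\mu$. First I would exploit the monotone structure of the system. Because $\phi$ is increasing, the cutsets form a down-closed subset of $\Omega$, and its maximal elements are precisely the mincuts $\mathcal{MC}=\{\omega^1,\ldots,\omega^\mu\}$. In a finite poset every element of a down-set lies below some maximal element, so $\omega$ is a cutset if and only if $\omega\le\omega^k$ for some $k$, and dually $\sigma$ is a pathset if and only if $\sigma\not\le\omega^k$ for every $k$. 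Writing $Z_k=\{i:\omega^k_i=0\}$ for the components switched off in $\omega^k$, the relation $\sigma\not\le\omega^k$ is equivalent, for binary vectors, to $\sum_{i\in Z_k}\sigma_i\ge 1$. Hence $\sigma$ is a pathset exactly when $\sum_{i\in Z_k}\sigma_i\ge 1$ holds simultaneously for all $k=1,\ldots,\mu$.

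Next I would construct the $\mu$ hyperplanes directly from this characterization. For each mincut $\omega^k$ set
\[
H_k:\quad \sum_{i\in Z_k} x_i = \tfrac{1}{2},
\]
whose normal vector is the indicator of $Z_k$ and therefore has non-negative components, as Definition~\ref{level} demands, and take in each case the positive half-space $\{x:\sum_{i\in Z_k}x_i>\tfrac{1}{2}\}$. Every pathset $\sigma$ satisfies $\sum_{i\in Z_k}\sigma_i\ge 1>\tfrac{1}{2}$ for all $k$, so all pathsets lie in the intersection of the $\mu$ positive half-spaces. Conversely, any cutset $\omega$ is dominated by some mincut $\omega^k$, whence $\sum_{i\in Z_k}\omega_i=0<\tfrac{1}{2}$, so $\omega$ is excluded from that particular half-space and no cutset survives in the intersection. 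Thus the family $\{H_k\}_{k=1}^{\mu}$ places all pathsets, and only the pathsets, in the intersection of $\mu$ half-spaces with non-negative normals, realizing the first condition of Definition~\ref{level} with $d=\mu$ and yielding $d\le\mu$.

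The step I expect to require the most care is matching this construction to the precise semantics of Definition~\ref{level}: one must confirm that ``all pathsets reside in the intersection'' is accompanied by the complementary requirement that each cutset be excluded by at least one hyperplane, since it is exactly this exclusion that forces the $d=1$ instance to coincide with ordinary separability and keeps the notion nontrivial. It is also worth flagging why the first rather than the second alternative is used: pathset membership is governed by the mincuts (one inequality per mincut), so the intersection form fits the pathsets naturally, whereas the cutset branch would instead be controlled by the minpaths. The remaining points are routine: the threshold $\tfrac{1}{2}$ is convenient only because it lies strictly between the cutset value $0$ and the pathset lower bound $1$, any value in $(0,1)$ serves equally well, and if unit $1$-norm normals are preferred one simply rescales each indicator by $|Z_k|$. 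With these checks in place the bound $d\le\mu$ follows.
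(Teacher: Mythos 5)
Your proof is correct and follows essentially the same route as the paper: one hyperplane per mincut, with normal vector the indicator of the mincut's zero-set $Z_k$, so that every pathset satisfies all $\mu$ inequalities while every cutset (being dominated by some mincut) violates at least one. Your version is in fact more complete than the paper's, which only checks that each mincut $\omega^i$ fails its own inequality $\pi_i$ and leaves the verification for general pathsets and cutsets implicit.
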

\begin{proof}
Suppose that $\omega^{1},\ldots,\omega^{\mu}$ is the list of all the mincuts of $\mathcal{S}$. 
Consider the sets $S_i=\{j: \omega_{j}^{i}=0\}$, that represent the non-operational states 
for the mincut $\omega^{i}$. Observe that the mincut $\omega^{i}$ does not meet the inequality 
$\pi_i: \sum_{j\in S_i}\omega_j^i \geq 1$. Furthermore, the hyperplanes $\pi_1,\ldots,\pi_{\mu}$ 
meet the definition~\ref{level}, and the result follows. 
\end{proof}

Proposition~\ref{vale} shows that the level of separability will always be well defined 
for any arbitrary SMBS, thus it is 
an alternative way to classify a notion of \emph{difficulty} in the reliability evaluation for SMBSs. 

If we return to the all-terminal reliability model, we know all the graphs with level of separability $d=1$ (i.e., all separable graphs). From Theorem~\ref{characterize} we can observe that the Butterfly-graph, the Glasses-graph and Monma represent minimally nonseparable cases. For a better understanding of definition~\ref{level}, we find the level of separability in these minimally nonseparable cases in the following paragraphs. 

Let us denote $x_1,x_2,x_3$ and $y_1,y_2,y_3$ the states of the links for the Butterfly-graph, corresponding 
to both triangles (see Figure~\ref{Butterfly}). All pathsets must have at least two links from every triangle, 
and the following 2 hyperplanes determine pathsets:
\begin{align*}
x_1+x_2+x_3 \geq 2\\
y_1+y_2+y_3 \geq 2.
\end{align*} 
Since we know that the Butterfly-graph is nonseparable, $d>1$, and since the previous hyperplanes fulfill the 
definition, the level of separability for the Butterfly-graph is $d=2$. 

Analogously, if we link both triangles with a new link $z$, we get the Glasses-graph.  
A slight modification of the hyperplanes shows that the Glasses-graph has level of separability $d=2$:
\begin{align*}
x_1+x_2+x_3+3z \geq 5\\
y_1+y_2+y_3 \geq 2.
\end{align*} 
Observe that we \emph{force} the link $z$ to be operational, adding the term $3z$ in the first hyperplane. 
Finally, let us consider Monma graph $M_{2,2,1}$ from Figure~\ref{ejemploMonma}, where the three paths 
have respective lengths 2, 2 and 1, and the respective links from each path are sequentially identified with the binary states $x_1, x_2, y_1, y_2$ and $z$. 
The reader is invited to check that the level of separability in Monma graph $M_{2,2,1}$ 
 is also $d=2$, and the following pair of hyperplanes works:
\begin{align*}
10x_1+10x_2+y_1+y_2+z \geq 12\\
x_1+x_2+10y_1+10y_2+z \geq 12.
\end{align*} 
Currently, there is no constructive algorithm to produce the minimum number of hyperplanes for an SMBS. 
We wish to develop a complementary theory to the one presented in Section~\ref{all-terminal} for separable graphs, but finding the correct level of separability for any given graph. Inspired by Theorem~\ref{base}, 
we propose the following:
\begin{conjecture}\label{d-separabilidad}
Let $d$ be a fixed positive integer. Then, the all-terminal reliability evaluation of graphs with 
level of separability $d$ belongs to the class $\mathcal{P}$ of polynomial-time problems.
\end{conjecture}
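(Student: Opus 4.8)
The plan is to reduce the problem to bounding a purely combinatorial parameter, the corank, and then to exploit the explicit form of the all-terminal reliability polynomial. Recall that this polynomial $R_G(p)=\sum_{i=\lambda(G)}^{c(G)-1}n_i(G)p^{m-i}(1-p)^{i}+\tau(G)p^{n-1}(1-p)^{m-n+1}$ has exactly $c(G)-\lambda$ unknown coefficients $n_\lambda(G),\ldots,n_{c(G)-1}(G)$, while $\tau(G)$ is computable by Kirchhoff's Matrix--Tree theorem and the coefficients $n_i(G)$ with $i<\lambda(G)$ equal $\binom{m}{i}$ (deleting fewer than $\lambda$ links cannot disconnect $G$). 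The key observation is that each unknown $n_i(G)$ with $\lambda(G)\le i\le c(G)-1$ counts connected spanning subgraphs obtained by deleting exactly $i\le c(G)-1$ links; if the corank $c(G)$ is bounded by a constant, then $i$ is bounded and $n_i(G)$ can be computed by enumerating the $\binom{m}{i}=O(m^{c(G)})$ subsets of that many links and testing connectivity. Hence, if I can prove that every graph of level of separability at most $d$ has corank bounded by a function $g(d)$, the whole polynomial is assembled in time $m^{O(g(d))}$, which is polynomial for fixed $d$. Note that this reduction never needs the separating hyperplanes themselves: the promise on the level of separability is used only to bound $c(G)$, after which the computation proceeds directly from $G$.

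The heart of the argument is therefore the structural claim that \emph{a connected graph of level of separability $d$ has corank at most $g(d)$}, which for $d=1$ reduces to the corank $\le 1$ characterization extracted from Theorem~\ref{characterize}. I would attack it exactly as the $d=1$ case was settled. First I would establish that the level of separability is hereditary, i.e. monotone under link and node deletion, generalizing Lemma~\ref{in}, Lemma~\ref{out} and Theorem~\ref{hereditary}: deleting a link should only relax the constraints that carve out the pathsets, and hence cannot increase the minimum number of half-spaces needed. Second, I would identify, for each target corank, a finite family of \emph{minimally} high-corank graphs (the analogues of the Monma, butterfly and Glasses graphs, which are the obstructions at level $2$) and show that the presence of sufficiently many independent cycles forces one of them as a subgraph. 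Combined with heredity, the existence of such a forbidden subgraph would certify a level of separability strictly greater than $d$, yielding the contrapositive of the desired bound.

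The main obstacle is precisely this structural implication, because the level of separability is a global, geometric quantity that interacts poorly with local cycle structure. The difficulty is \emph{weight packing}: a single hyperplane with unbalanced coefficients can simultaneously enforce connectivity constraints coming from several cycles, exactly as the Glasses encoding $x_1+x_2+x_3+3z\ge 5$ folds the bridge into a triangle, and as the weights $10$ and $1$ in the Monma $M_{2,2,1}$ encoding pack two paths into one inequality. Consequently it is not at all clear that $k$ independent cycles require $\Omega(k)$ hyperplanes, and a naive ``one hyperplane per cycle'' lower bound fails. To overcome this I would try to prove a genuine lower bound on the number of half-spaces in terms of the corank, for instance by a dimension or shattering argument on the polytope $CH(\mathcal{P})$ spanned by the spanning trees, whose facet structure near the cutsets ought to force many separating inequalities; failing a clean bound, I would fall back on directly enumerating the minimal obstructions at each level $d$ and invoking heredity. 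Establishing this trade-off between weight packing and the number of hyperplanes, and thereby the bound $c(G)\le g(d)$, is the crux on which the whole argument rests.
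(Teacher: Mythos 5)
The statement you are addressing is not proved in the paper at all: it is stated as Conjecture~\ref{d-separabilidad}, and the authors explicitly defer it to future work, noting that there is not even a constructive algorithm to find the minimum number of hyperplanes for an SMBS. So there is no proof of the paper's to compare against, and the only question is whether your attempt actually establishes the result. It does not, and by your own admission. Your first paragraph is a sound \emph{conditional} reduction: if every connected graph with level of separability $d$ had corank bounded by some $g(d)$, then enumerating the $O(m^{g(d)})$ link subsets of size at most $c(G)$ and testing connectivity of the complements would compute the reliability (with arbitrary, not necessarily identical, link reliabilities) in polynomial time for fixed $d$. But the entire argument rests on the structural claim that the level of separability bounds the corank, and that claim is left unproved; you correctly identify it as ``the crux on which the whole argument rests'' and then offer only directions of attack, not an argument.

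Concretely, two steps of your proposed attack do not currently go through. First, heredity of the level of separability under link and node deletion is asserted by analogy with Lemmas~\ref{out} and~\ref{in}, but those lemmas are proved via the arithmetic characterization $S<MST(G)+m(G)$ of Theorem~\ref{assign}, which is specific to a \emph{single} hyperplane; a system of $d$ separating half-spaces for $G$ does not obviously induce one for $G-e$, because deleting a link changes which states are pathsets (new mincuts appear), and ``deleting a link only relaxes the constraints'' is not a proof. Second, the claim that each level $d$ admits a finite family of minimal obstructions whose exclusion forces bounded corank is precisely the hard combinatorial-geometric content, and the weight-packing phenomenon you describe (as in the $M_{2,2,1}$ encoding, where one inequality handles two paths) shows that the naive ``one hyperplane per independent cycle'' lower bound fails; you propose a dimension or shattering argument on $CH(\mathcal{P})$ but do not carry it out, and it is not even clear that the structural claim is true. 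As it stands, the proposal is a plausible research programme for the conjecture, not a proof of it.
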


\section{Conclusions}
\label{conclusiones}
In this work, we study the reliability evaluation of stochastic binary systems (SBS) and some properties arising from the definition of separability, and we apply these concepts to take a new look at the all-terminal reliability model. 

An efficient representation of separable systems is presented, and a full characterization of these special systems is obtained for some particular models. The major strength of separable systems is their efficient representation. The major shortcoming is that the reliability evaluation is still ${\mathcal NP}$-hard. 


Separable systems accept polynomial-time reliability evaluation when restricted to 
the all-terminal reliability model. This result was discovered using functional analysis and feasible 
functionals from the links of a graph, meeting separability constraints.  

As future work, we would like to establish Conjecture~\ref{d-separabilidad} for a better understanding 
of nonseparable systems, and the interplay between general SBSs and the all-terminal reliability model, 
which has a wide spectrum of applications.

\section*{Acknowledgment}
We thank Dr. Luis St\'abile for a fruitful discussion shared with him about the separability of Monma graphs. His generosity is much appreciated by the authors. This work was partially supported by Project 395 CSIC I+D \emph{Sistemas Binarios Estoc\'asticos Din\'amicos}, by Project ANII FCE\_1\_2019\_1\_156693 \emph{Teoría y Construcción de Redes de Máxima Confiabilidad} and by Project MATHAMSUD 19-MATH-03 \emph{RareDep}.

\bibliographystyle{plain}
\bibliography{Biblio}

\begin{thebibliography}{10}

\bibitem{biggs1993algebraic}
N.~Biggs.
\newblock {\em Algebraic Graph Theory}.
\newblock Cambridge Mathematical Library. Cambridge University Press, 1993.

\bibitem{Boesch1991}
F.~T. Boesch, X.~Li, and C.~Suffel.
\newblock On the existence of uniformly optimally reliable networks.
\newblock {\em Networks}, 21(2):181--194, 1991.

\bibitem{Canale2016}
E.~Canale, P.~Romero, G.~Rubino, and X.~Warnes.
\newblock Network utility problem and easy reliability polynomials.
\newblock In {\em 2016 8th International Workshop on Resilient Networks Design
  and Modeling (RNDM)}, pages 79--84, Sept 2016.

\bibitem{RNDM2018}
H.~Cancela, G.~Ferreira, G.~Guerberoff, F.~Robledo, and P.~Romero.
\newblock Building reliability bounds in stochastic binary systems.
\newblock In {\em 2018 10th International Workshop on Resilient Networks Design
  and Modeling (RNDM)}, pages 1--7, Aug 2018.

\bibitem{diestel2006graph}
R.~Diestel.
\newblock {\em Graph Theory}.
\newblock Electronic library of mathematics. Springer, 2006.

\bibitem{Garey:1979:CIG:578533}
Michael~R. Garey and David~S. Johnson.
\newblock {\em Computers and Intractability: A Guide to the Theory of
  NP-Completeness}.
\newblock W. H. Freeman and Company, New York, NY, USA, 1979.

\bibitem{JOHANSSON201327}
Jonas Johansson, Henrik Hassel, and Enrico Zio.
\newblock Reliability and vulnerability analyses of critical infrastructures:
  Comparing two approaches in the context of power systems.
\newblock {\em Reliability Engineering \& System Safety}, 120:27 -- 38, 2013.

\bibitem{Lagos20}
Guido Lagos and Pablo Romero.
\newblock On the reliability of dynamical stochastic binary systems.
\newblock In {\em Proceedings of the 6th International Conference on Machine
  Learning, Optimization and Data Science}, Springer Lecture Notes in Computer
  Science (to appear), 2020.

\bibitem{Li201624}
J.~Li, L.~{Due\~nas-Osorio}, C.~Chen, and C.~Shi.
\newblock Connectivity reliability and topological controllability of
  infrastructure networks: A comparative assessment.
\newblock {\em Reliability Engineering and System Safety}, 156:24--33, 2016.

\bibitem{MACCHI201271}
Marco Macchi, Marco Garetti, Domenico Centrone, Luca Fumagalli, and Gian~Piero
  Pavirani.
\newblock Maintenance management of railway infrastructures based on
  reliability analysis.
\newblock {\em Reliability Engineering \& System Safety}, 104:71 -- 83, 2012.

\bibitem{Monma1990}
ClydeL. Monma, BethSpellman Munson, and WilliamR. Pulleyblank.
\newblock Minimum-weight two-connected spanning networks.
\newblock {\em Mathematical Programming}, 46(1-3):153--171, 1990.

\bibitem{MURIELVILLEGAS2016151}
Juan~E. Muriel-Villegas, Karla~C. Alvarez-Uribe, Carmen~E.
  {Pati\~no-Rodr\'iguez}, and Juan~G. Villegas.
\newblock Analysis of transportation networks subject to natural hazards -
  insights from a colombian case.
\newblock {\em Reliability Engineering \& System Safety}, 152:151 -- 165, 2016.

\bibitem{Romero2019}
P.~{Romero}.
\newblock Challenges in stochastic binary systems.
\newblock In {\em 2019 11th International Workshop on Resilient Networks Design
  and Modeling (RNDM)}, pages 1--8, Oct 2019.

\bibitem{Rosenthal}
A.~Rosenthal.
\newblock Computing the reliability of complex networks.
\newblock {\em SIAM Journal on Applied Mathematics}, 32(2):384--393, 1977.

\bibitem{rudin1974functional}
W.~Rudin.
\newblock {\em Functional Analysis}.
\newblock International series in pure and applied mathematics. Tata
  McGraw-Hill, 1974.

\bibitem{Whitney}
Hassler Whitney.
\newblock Non-separable and planar graphs.
\newblock {\em Proceedings of the National Academy of Sciences.},
  17(2):125--127, February 1931.

\end{thebibliography}
\end{document}